\newtheorem{lem}{Lemma}[section]
\newcommand{\ul}{\underline}
\newcommand{\onehalf}{\frac{1}{2}}
\newcommand{\commutator}[2]{[#1,#2]}
\newcommand{\anticommutator}[2]{\{#1,#2\}}
\newcommand{\Lagrangian}{\mathcal{L}}
\definecolor{azure}{rgb}{0.0, 0.5, 1.0}
\definecolor{darkblue}{rgb}{0.15,0.35,0.7}
\definecolor{reddish}{rgb}{0.65, 0.2, 0.2}
\definecolor{brandeisblue}{rgb}{0.0, 0.44, 1.0}
\definecolor{ceruleanblue}{rgb}{0.16, 0.32, 0.75}
\definecolor{indigo(dye)}{rgb}{0.0, 0.25, 0.42}
\newcommand{\overbar}[1]{\mkern 1.5mu\overline{\mkern-1.5mu#1\mkern-1.5mu}\mkern 1.5mu}
\newcommand{\TTbar}{T\overbar{T}}
\newcommand{\TT}{T\overbar{T}}
\newcommand{\ov}[1]{\overbar{#1}}
\newcommand{\uu}[1]{\mathfrak u (#1)}
\newcommand{\ttr}[1]{{\rm tr}\left ( #1\right) }
\newcommand{\cL}{{\cal L}}
\newcommand{\fg}{\mathfrak g}
\newcommand{\vt}[1]{{v^{(2)}_{#1}}}
\newcommand{\jt}[1]{{j^{(2)}_{#1}}}
\newcommand{\jz}[1]{{j^{(0)}_{#1}}}
\DeclareMathOperator{\tr}{\text{tr}}
\DeclareSymbolFont{stixsymbols}{LS1}{stixscr}{m}{n}
\DeclareMathSymbol{\kay}{\mathalpha}{stixsymbols}{"6B}
\DeclareMathSymbol{\hay}{\mathalpha}{stixsymbols}{"68}
\renewcommand\section{\@startsection {section}{1}{\z@}%
                               {-3.5ex \@plus -1ex \@minus -.2ex}
                               {2.3ex \@plus.2ex}%
                               {\normalfont\large\bfseries}}
\renewcommand\subsection{\@startsection{subsection}{2}{\z@}%
                                 {-3.25ex\@plus -1ex \@minus -.2ex}%
                                 {1.5ex \@plus .2ex}%
                                 {\normalfont\bfseries}}
\newcommand*\bigcdot{\mathpalette\bigcdot@{.5}}
\newcommand*\bigcdot@[2]{\mathbin{\vcenter{\hbox{\scalebox{#2}{$\m@th#1\bullet$}}}}}
\newfont{\goth}{ygoth.tfm scaled 1200}                   
\numberwithin{equation}{section}
\newcommand{\del}{\partial}
\newcommand{\Cp}[1]{{\mathbb {C P}^{#1}}}
\def\d{\delta}
\def\l{\gamma}
\newcommand{\pa}{\partial}
\newcommand{\hf}{\frac12}
\newcommand{\bea}{\begin{eqnarray}}
\newcommand{\eea}{\end{eqnarray}}
\newcommand{\ba}{\begin{array}}
\newcommand{\ea}{\end{array}}
\def\double #1{#1{\hbox{\kern-2pt $#1$}}}
\newcommand{\bsubeq}{\begin{subequations}}
\newcommand{\esubeq}{\end{subequations}}
\def\tr{{\rm tr}}
\begin{document}
\begin{titlepage}
\begin{flushright}
\today
\end{flushright}
\vspace{5mm}

\begin{center}
{\Large \bf Soliton Surfaces and the Geometry of Integrable Deformations of the $\mathbb{CP}^{N-1}$ Model}
\end{center}

\begin{center}

{\bf
Christian Ferko${}^{a,b}$,
Michele Galli${}^{c}$,
Zejun Huang${}^{c}$,\\
Gabriele Tartaglino-Mazzucchelli${}^{c}$
} \\
\vspace{5mm}

\footnotesize{
${}^{a}$
{\it 
Department of Physics, Northeastern University, Boston, MA 02115, USA
}
 \\~\\
${}^{b}$
{\it 
The NSF Institute for Artificial Intelligence
and Fundamental Interactions
}
  \\~\\
${}^{c}$
{\it 
School of Mathematics and Physics, University of Queensland,
\\
 St Lucia, Brisbane, Queensland 4072, Australia}
}
\vspace{2mm}
~\\
\texttt{
c.ferko@northeastern.edu,
m.galli@uq.edu.au,
zejun.huang@uq.net.au,
g.tartaglino-mazzucchelli@uq.edu.au
}\\
\vspace{2mm}

\end{center}

\begin{abstract}
\baselineskip=14pt

The $\mathbb{CP}^{N-1}$ model is an analytically tractable $2d$ quantum field theory which shares several properties with $4d$ Yang-Mills theory. By virtue of its classical integrability, this model also admits a family of integrable higher-spin auxiliary field deformations, including the $\TTbar$ deformation as a special case. We study the $\mathbb{CP}^{N-1}$ model and its deformations from a geometrical perspective, constructing their soliton surfaces and recasting physical properties of these theories as statements about surface geometry. 
We examine how the $\TTbar$ flow affects the unit constraint in the $\mathbb{CP}^{N-1}$ model and prove that any solution of this theory with vanishing energy-momentum tensor remains a solution under analytic stress tensor deformations -- an argument that extends to generic dimensions and instanton-like solutions in stress tensor flows including the non-analytic, $2d$, root-${\TTbar}$ case and classes of higher-spin, Smirnov-Zamolodchikov-type, deformations.
Finally, we give two geometric interpretations for general $\TTbar$-like deformations of symmetric space sigma models, showing that such flows can be viewed as coupling the undeformed theory to a unit-determinant field-dependent metric, or using a particular choice of moving frame on the soliton surface.

\end{abstract}
\vspace{5mm}

\vfill
\end{titlepage}


\renewcommand{\thefootnote}{\arabic{footnote}}
\setcounter{footnote}{0}

\tableofcontents{}
\vspace{0.5cm}
\bigskip\hrule

\newpage
\section{Introduction}

Understanding the dynamics of quantum field theories at strong coupling remains one of the most important challenges in theoretical physics. Famously, $4d$ Yang-Mills (YM) theory is asymptotically free, and hence weakly coupled in the UV, but becomes strongly coupled in the IR. For years, this has motivated a search for analogue theories which share some properties with Yang-Mills theory but which are more amenable to theoretical analysis.

One well-known such analogue, or toy model, is the $\mathbb{CP}^{N-1}$ theory in two spacetime dimensions. Classically, this theory describes a collection of massless, interacting, complex scalar fields $\phi_I$. The $\mathbb{CP}^{N-1}$ model is analytically tractable in at least two senses:

\begin{enumerate}[label = (\roman*)]
    \item\label{classical} The theory is \emph{classically} integrable due to its realization as a symmetric space sigma model, which means that its equations of motion are equivalent to the flatness of a Lax connection. As a consequence, the theory possesses an infinite collection of conserved charges, which can be shown to mutually Poisson-commute. The existence of this tower of conserved quantities severely constrains the dynamics of the theory, and opens up the possibility of applying specialized techniques to study its behavior. Integrability is shared with $4d$ self-dual YM, but not with the full $4d$ YM theory.

    \item\label{quantum} Certain calculations can be performed for the \emph{quantum} $\mathbb{CP}^{N-1}$ model, especially at large $N$. Although the theory is classically conformally invariant, quantum effects generate a scale $\Lambda_{\mathbb{CP}^{N-1}}$ (analogous to $\Lambda_{\text{QCD}}$) by the process of dimensional transmutation. One can compute the beta function for a quantity $v^2$ that determines the ``size'' of the target $\mathbb{CP}^{N-1}$ space, finding that $v^2$ becomes smaller in the IR, which means that the theory is strongly coupled at low energies like Yang-Mills. The theory has a mass gap: the classically massless scalars $\phi_I$ develop a mass set by $\Lambda_{\mathbb{CP}^{N-1}}$ in the quantum theory. The model also exhibits confinement, as the fields $\phi_I$ confine into singlet ``mesons'' constructed from $\phi$ and $\phi^\dagger$, much like quark confinement in QCD.\footnote{Therefore, depending on one's perspective, one can view the $\mathbb{CP}^{N-1}$ theory either as a toy model for pure Yang-Mills (thinking of $\phi_I$ as like gluons, in that they are classically massless yet the theory has a mass gap) or for QCD (viewing $\phi_I$ as more similar to quarks, e.g. because they confine into ``mesons'').}
\end{enumerate}

Points \ref{classical} and \ref{quantum} are complementary, but in some sense orthogonal, because the classical integrability of the $\mathbb{CP}^{N-1}$ model is not believed to persist at the quantum level -- with the exception of $\mathbb{CP}^1$, which is the $O(3)$ model -- unless one couples to additional fermions \cite{GOLDSCHMIDT1980392,Abdalla:1980jt,Abdalla:1981yt,Gomes:1982qh,Komatsu:2019hgc}.

In this work, we will focus entirely on the classical aspects of the $\mathbb{CP}^{N-1}$ theory and related models in order to leverage the integrability of point \ref{classical}. Although this prevents us from investigating interesting quantum phenomena such as confinement, dimensional transmutation, and so on, there are already useful similarities between $\mathbb{CP}^{N-1}$ and Yang-Mills at the classical level. For instance, the $\mathbb{CP}^{N-1}$ model possesses instanton solutions, owing to the fact that $\pi_2 \left( \mathbb{CP}^{N-1} \right) = \mathbb{Z}$, which allows for topologically non-trivial field configurations in different winding sectors.\footnote{Here $\pi_n$ denotes the $n$-th homotopy group and the \emph{winding number} $Q \in \mathbb{Z}$ labels homotopy classes.} These solutions are close analogues of the instanton solutions of $4d$ Yang-Mills that can exhibit winding due to the fact that $\pi_3 ( G ) = \mathbb{Z}$ for any simple, compact Lie group $G$. Such instantonic field configurations in the $\mathbb{CP}^{N-1}$ model -- or, more precisely, its deformations -- will be one of the focuses of this work.

We mentioned in point \ref{classical} that integrable models permit the use of specialized techniques. One of these is a geometrical reformulation of the physics of these theories which involves an immersed surface. Over the past four decades, the study of $2d$ integrable models through various immersion methods has seen significant development \cite{Sym:1981an,Sym:1983px,solitonsurfacesandtheirapplications,Bobenko1994,https://doi.org/10.1002/sapm19969619,CIESLINSKI19971,bobenko2000painleve,helein2001constant,grundland2012soliton,grundland2016immersionformulassolitonsurfaces,Grundland2016generalizedsymmetryapproach,Grundland_2011integrableequations,Yesmakhanova_2019,Conti_2019,GLandolfi_2003,Baran_2010,talukdar2024fokaslenellsderivativenonlinearschrodinger,Bertrand_2016,Bertrand_2017}. The surfaces constructed via these immersions are referred to as soliton surfaces. Two-dimensional integrable models admit a rich geometric interpretation through the construction of soliton surfaces from their associated Lax pairs. In this framework, the zero-curvature condition, originally a compatibility condition for a pair of linear equations, serves as the starting point for defining such surfaces via an immersion into a Lie algebra. In the classical setting of surfaces embedded in $\mathbb{R}^{3}$, the first and second fundamental forms satisfying the Gauss–Mainardi–Codazzi (GMC) equations uniquely determine the surface up to rigid motions. This uniqueness ensures that the geometric data encoded in the Lax pair translates consistently into a well-defined surface, transforming the original system of partial differential equations into a geometric problem governed by the GMC equations. This correspondence was first formalized by Sym, who introduced what is now known as the Sym-Tafel formula, establishing a direct link between the zero-curvature condition and the geometry of soliton surfaces \cite{Sym:1981an,Sym:1983px,solitonsurfacesandtheirapplications,Bobenko1994}. As such, the soliton surface approach not only offers a powerful reformulation of integrable dynamics in geometric terms but also provides a concrete framework for visualizing and analyzing these systems from a geometric perspective.

Immersion methods have already been successfully applied to the standard $\mathbb{CP}^{N-1}$ model; see, for instance, \cite{grundland2006description,grundland2008conformally,grundland2008surfaces,grundland2012soliton}. However, given one useful toy model such as $\mathbb{CP}^{N-1}$, it is natural to attempt to generalize to a larger family of theories while retaining the useful properties which make the ``seed'' model tractable for analysis. In our setting, where the desirable property of point \ref{classical} is classical integrability, the task is therefore to find integrable deformations of the $\mathbb{CP}^{N-1}$ model and study them using techniques such as immersion methods. Carrying out this procedure is the primary purpose of the present work.

Specifically, we will focus on integrable deformations of the $\mathbb{CP}^{N-1}$ model which are constructed from conserved quantities. The most famous of these is the $\TT$ deformation \cite{Zamolodchikov:2004ce,Smirnov:2016lqw,Cavaglia:2016oda}, which is built from the conserved energy-momentum tensor and which preserves integrability when applied to an integrable seed theory. In recent years, many related deformations of $2d$ field theories have been proposed and studied, such as higher-spin Smirnov-Zamolodchikov (SZ) operators \cite{Smirnov:2016lqw}, the root-$\TT$ deformation \cite{Ferko:2022cix}, and the auxiliary field deformations of sigma models \cite{Ferko:2024ali} which include $\TT$, root-$\TT$, and SZ as special cases (for other results on auxiliary field deformations, see \cite{Bielli:2024ach,Bielli:2025uiv,Bielli:2024oif,Bielli:2024fnp,Bielli:2024khq,Fukushima:2024nxm,Cesaro:2024ipq}). We will study various aspects of all these integrable deformations, when applied to the $\mathbb{CP}^{N-1}$ model, with the aim of understanding their effects on the soliton surfaces which geometrize physical properties of the theory and on the instantonic solutions which this model shares with $4d$ Yang-Mills.

We begin by reviewing the basics of the $\mathbb{CP}^{N-1}$ model and soliton surfaces in Sections \ref{sec:CPN_review} and \ref{sec:generalities_soliton}, respectively. In Section \ref{STCPN}, we construct the soliton surfaces associated with the model and specialize to the $\mathbb{CP}^{1}$ case, where we compute the first and second fundamental forms as well as the Gaussian and mean curvatures.
Next, in Section~\ref{s:ttbcpn}, we show that the unit constraint and the topological chiral (instanton) solutions of the $\Cp {N-1}$ model are not affected by $T\ov T $ deformation. More generally, we find that holonomic constraints of any sigma model are preserved by the $T\ov T $ deformation. In particular, we show that the instanton solutions are preserved under the deformation. This result will later be extended in Section \ref{s:instpres1} to a general proof that any solution with a vanishing stress tensor is preserved in analytic $\TTbar$-like deformations of $d$-dimensional field theories, extending the $2d$ case, and the $4d$ Yang-Mills case studied in \cite{Ferko:2024yua}. In Section~\ref{s:cpnaux}, we reformulate the model as a coset theory and couple it to a set of auxiliary fields to obtain a higher-spin deformation of the $\mathbb{CP}^{N-1}$ theory, a framework that encompasses cases such as $\TTbar$-like and SZ deformations \cite{Ferko:2022cix,Bielli:2024oif,Bielli:2024ach,Bielli:2025uiv}. We then construct the soliton surfaces of the deformed model and recompute the corresponding geometric quantities in this setting, showing that the Gauss curvature is untouched by the deformation in the $\mathbb{CP}^{1}$ case. Within this formalism, it becomes evident that the instanton solutions are preserved under the deformation, and this result is extended to the $\sqrt{\TTbar}$ case. Finally, in Section \ref{sec:geometric}, we give geometric interpretations of auxiliary field deformations of symmetric space sigma models as either (a) coupling the undeformed theory to a unit-determinant field-dependent metric, or (b) enacting a non-trivial transformation on the tangent vectors to the soliton surface. Section \ref{sec:conclusion} summarizes our work and presents some future directions. We collect ancillary calculations in Appendices \ref{Expcheck} and \ref{ap:rootlag}.

\section{Review of the $\mathbb{CP}^{N-1}$ model and soliton surfaces}\label{ST2review}

In this section, we review some well-known background material concerning the $\mathbb{CP}^{N-1}$ model and soliton surfaces of $2d$ integrable models. We refer to the incomplete sampling of works \cite{divecchiaEffectiveCPN,Witten:1978bc,Actor:1980ey,10.1143/PTP.62.226,PhysRevD.73.065011,Bonanno:2018xtd} and references therein, or the pedagogical review in Chapter 7 of \cite{Tong:GaugeTheory}, for more information on the $\mathbb{CP}^{N-1}$ model. Likewise, to learn more about soliton surfaces, the interested reader might consult the original paper \cite{Sym:1981an} or the appendices in \cite{Conti_2019}. Again, we emphasize that here, and in the rest of the paper, our analysis will be entirely classical.

\subsection{Aspects of the $\mathbb{CP}^{N-1}$ model}\label{sec:CPN_review}

The $\mathbb{CP}^{N-1}$ model, serving as a $2d$ analogue of $4d$ Yang-Mills theory \cite{Yamazaki:2017ulc,Witten:1978bc,Actor:1980ey}, is described by the Lagrangian
\begin{equation}
    \Lagrangian=(\partial^{\mu}\phi^{\dagger})(\partial_{\mu}\phi)+\phi^{\dagger}(\partial^{\mu}\phi)\phi^{\dagger}(\partial_{\mu}\phi)
    \label{CPNLagrangian1},
\end{equation}
where
\begin{equation}\label{scalar_defn}
    \phi=
    \begin{bmatrix}
    \phi_1\\
    .\\
    .\\
    .\\
    \phi_{N}
    \end{bmatrix}
\end{equation}
is a collection of $N$ complex scalars satisfying the unit constraint
\begin{equation}
|\phi|^2=\phi^{\dagger}\phi=1
\label{unitconstraint}.    
\end{equation}
We will also refer to $\phi$ as a scalar multiplet. Besides a global $SU(N)$ symmetry, the system possesses a $U(1)$ local symmetry mediated by the pseudo-gauge field $A_{\mu}$. This makes it viable to define the covariant derivative
\begin{equation}
    D_{\mu}=\partial_{\mu}-igA_{\mu}
    \label{defcovariantderivative},
\end{equation}
where $g$ is the coupling constant and $A_{\mu}$ is given by
\begin{equation}
    A_{\mu}=\frac{i}{2g}\Big((\partial_{\mu}\phi^{\dagger})\phi-\phi^{\dagger}(\partial_{\mu}\phi)\Big)
    \label{gaugefieldhermitian},
\end{equation}
written in a manifestly Hermitian form. Since $A_{\mu}$ is not an independent physical field being constructed out of $\phi$ and $\phi^\dagger$,  we refer to $A_{\mu}$ as the pseudo-gauge field. With Eq. \eqref{defcovariantderivative}, the Lagrangian can be rewritten in a more compact form
\begin{equation}
    \Lagrangian=|D_{\mu}\phi|^2=(D^{\mu}\phi)^{\dagger}(D_{\mu}\phi)
    \label{CPNLagrangian2},
\end{equation}
accompanied by a useful identity
\begin{equation}
    (D_{\mu}\phi)^{\dagger}\phi=\phi^{\dagger}(D_{\mu}\phi)=0
    \label{covariantinnerproductidentity}.
\end{equation}
Taking into account the constraint in Eq. \eqref{unitconstraint}, the Lagrangian becomes
\begin{equation}
    \Lagrangian=(D^{\mu}\phi)^{\dagger}(D_{\mu}\phi)-\lambda(\phi^{\dagger}\phi-1)
    \label{CPNLagrangian3},
\end{equation}
where $\lambda$ is the Lagrange multiplier, and the corresponding equation of motion reads
\begin{equation}
    D^2\phi+|D_{\mu}\phi|^2\phi=0,
    \label{EOMCPN-1}
\end{equation}
with the associated energy-momentum tensor given by
\begin{equation}
    T^{\mu\nu}=(D^{\mu}\phi)^{\dagger}(D^{\nu}\phi)+(D^{\nu}\phi)^{\dagger}(D^{\mu}\phi)-g^{\mu\nu}(D^{\alpha}\phi)^{\dagger}(D_{\alpha}\phi)
    \label{energymomentumtensor1}.
\end{equation}

\subsubsection*{\ul{\it Topological charge and instanton}}

The concept of topological charge varies depending on the context. The most common instance is the winding number, a quantity labeling instanton solutions that are well-defined on a Euclidean manifold but not on a Lorentzian manifold.

Here, we provide a brief and intuitive recap of how a winding number is defined for a complex scalar field (multiplet). Instead of directly beginning with a scalar multiplet, which is the case for the $\mathbb{CP}^{N-1}$ model, we first consider the simpler case of a single complex scalar field, which possesses a well-defined phase directly related to the field itself. Classically, such a field $\psi(x)$ can be expressed as
\begin{equation}
    \psi(x)=\rho(x)e^{i\theta(x)}
    \label{psidef1},
\end{equation}
where $\rho(x)$ is the magnitude of the field and $\theta(x)$ is the real-valued phase factor. This leads to
\begin{equation}
     d\theta=(\partial_{\mu}\theta)dx^{\mu}=\frac{\psi^{\dagger}(\partial_{\mu}\psi)-(\partial_{\mu}\psi^{\dagger})\psi}{2i|\psi|^2}dx^{\mu},
\end{equation}
and hence the winding number is given by
\begin{equation}
    Q=\frac{1}{4\pi i}\int\frac{\psi^{\dagger}(\partial_{\mu}\psi)-(\partial_{\mu}\psi^{\dagger})\psi}{|\psi|^2}dx^{\mu}
    \label{Qdef0}.
\end{equation}
In any $d\geq2$, the expression above can be rewritten using Stokes' theorem. In $d=2$ it becomes
\begin{equation}
    Q=\frac{1}{2\pi}\int d^2x\epsilon^{\mu\nu}(\partial_{\mu}\partial_{\nu}\theta)=0,
    \label{Qdef1}
\end{equation}
where the integrand itself vanishes identically due to the contraction between symmetric and antisymmetric tensors, rendering it a trivial quantity in $d\geq2$.\footnote{Although Eq. \eqref{Qdef1} is specific to $d=2$, the vanishing topological charge due to Stokes' theorem holds for any $d\geq2$.} While this definition of the winding number is intuitive, it appears to be non-trivial only in the $1d$ case. However, this vanishing result is in fact a remarkable feature, far from trivial, and we now turn to its significance.

In contrast to $\psi$, which has a well-defined phase in Eq. \eqref{psidef1}, $\phi$, as the complex scalar multiplet in the $\mathbb{CP}^{N-1}$ model, lacks a globally well-defined common phase. As a result, replacing $\psi$ with $\phi$ in Eq. \eqref{Qdef0} does not make it vanish. Moreover, for a finite action in the model, we require that the Lagrangian vanishes as $|x|\rightarrow\infty$. This translates to
\begin{equation}
    \lim_{|x|\rightarrow\infty}D_{\mu}\phi\rightarrow0,
\end{equation}
and the field asymptotically takes the form
\begin{equation}
    \phi(x)\rightarrow\phi_{0}e^{i\Theta(x)}
    \label{fieldasymptoticalform},
\end{equation}
where $\phi_{0}$ is an arbitrary constant multiplet and $\Theta$ is a real-valued phase field (see \cite{151eee58736b41eaab5e72d7ffcd88a7} for more details). Although each component may carry an arbitrary phase far from infinity, Eq. \eqref{fieldasymptoticalform} implies that their phases differ only by a constant at infinity, making it not only have a well-defined phase $\Theta$, but also vanish at the boundary due to Stokes' theorem, thereby preventing the topological charge from diverging when Eq. \eqref{Qdef0} is extended to the case of a complex scalar multiplet. With the results above, the link between the phase and the pseudo-gauge field is
\begin{equation}
    \Rightarrow\partial_{\mu}\Theta:=gA_{\mu},
\end{equation}
and the topological charge is given by
\begin{equation}
    Q=\frac{i}{4\pi}\int dx^{\mu}\Big((\partial_{\mu}\phi^{\dagger})\phi-\phi^{\dagger}(\partial_{\mu}\phi)\Big),
\end{equation}
or equivalently
\begin{equation}
    Q=\frac{i}{2\pi}\int d^2x\epsilon^{\mu\nu}\partial_{\mu}\Big((\partial_{\nu}\phi^{\dagger})\phi\Big)
    \label{CPNtopochargestokes}
\end{equation}
using Stokes' theorem.

We can define the charge density $\rho$ and topological current $j^{\mu}$ through
\begin{subequations}
\begin{gather}
    Q=\int d^2x \, \rho , \\
    \rho=\partial_{\mu}j^{\mu},
\end{gather}
\end{subequations}
and extract them from Eq. \eqref{CPNtopochargestokes} as
\begin{equation}
    j^{\mu}=\frac{g}{2\pi}\epsilon^{\mu\nu}A_{\nu}=\frac{i}{2\pi}\epsilon^{\mu\nu}(\partial_{\nu}\phi^{\dagger})\phi
    \label{topologicalcurrent2}
\end{equation}
and
\begin{equation}
    \rho=\frac{i}{2\pi}\epsilon^{\mu\nu}\partial_{\mu}\Big((\partial_{\nu}\phi^{\dagger})\phi\Big)
    \label{topologicalchargedensity2}.
\end{equation}

The instanton solution arises from minimizing the action for a given topological charge. Consider the following inequality:
\begin{equation}
    |(D_{\mu}\phi)\pm i\epsilon_{\mu\nu}(D^{\nu}\phi)|^2\geq0
    \label{inequality1},
\end{equation}
which is equivalent to
\begin{equation}
    \mathcal{S}\geq2\pi|Q|
    \label{inequality4},
\end{equation}
with $\mathcal{S}=\int d^2x\mathcal{L}$ being the action principle associated to the Lagrangian $\mathcal{L}$ of equations \eqref{CPNLagrangian1} and \eqref{CPNLagrangian2}.
From Eq. \eqref{inequality1}, we know the bound in Eq. \eqref{inequality4} is saturated by field configurations obeying the (anti-)self-duality condition
\begin{equation}
    D^{\mu}\phi\pm i\epsilon^{\mu\nu}D_{\nu}\phi=0
    \label{InstantonDE},
\end{equation}
which corresponds to a special subset of solutions to the equations of motion in Eq. \eqref{EOMCPN-1}.
The $Q=\pm1$ solution to the instanton equation in Eq. \eqref{InstantonDE} takes the form
\begin{equation}
    \phi=\frac{\omega}{|\omega|},
\end{equation}
where $\omega$ is another complex scalar multiplet with arbitrary norm, given by
\begin{subequations}
\begin{gather}
    \omega=u\Omega+v(x^{\pm}-a^{\pm}) , \\
    |\omega|^2=(x-a)^2+\Omega^2.
\end{gather}
\end{subequations}
Here, $u$ and $v$ are two orthonormal vectors satisfying
\begin{equation}
    |u|^2=|v|^2=1 \, , \quad u^{\dagger}v=0,
\end{equation}
and $x^{\pm}$ and $a^{\pm}$ are the complexified coordinates defined as
\begin{equation}
    x^{\pm}=x^0\pm ix^1 \, , \quad a^{\pm}=a^0\pm ia^1
    \label{complexifiedcoordinate}.
\end{equation}
The parameter $a$ denotes the center of the topological charge, while $\Omega\in\mathbb{R}_{+}$ serves as a scale parameter.
Since the $\mathbb{CP}^{N-1}$ model serves as an analogue of $4d$ Yang-Mills, the pseudo-gauge field corresponding to the instanton solution above takes the BPST form
\begin{equation}
    A_{\mu}=\pm\frac{1}{g}\epsilon_{\mu\nu}\frac{x^{\nu}-a^{\nu}}{(x-a)^2+\Omega^2},
    \label{BPSTinstantonform}
\end{equation}
mirroring the $4d$ Yang-Mills case, as expected, where the 't Hooft symbol
is replaced by $\epsilon_{\mu\nu}$ in the $2d$ setting \cite{Actor:1980ey}.

It is convenient to switch to the complex coordinates defined in Eq. \eqref{complexifiedcoordinate}, which leads to the following identity:
\begin{equation}
    \commutator{D_{+}}{D_{-}}=(D_{+}\phi)^{\dagger}(D_{+}\phi)-(D_{-}\phi)^{\dagger}(D_{-}\phi)=\pi\rho.
\end{equation}
In addition, one has
\begin{equation}
    \partial_{-}D_{+}+\partial_{+}D_{-}+\pi\rho=0.
\end{equation}

In this coordinate system, the Lagrangian becomes
\begin{equation}
    \Lagrangian=2\Big((D_{+}\phi)^{\dagger}(D_{+}\phi)+(D_{-}\phi)^{\dagger}(D_{-}\phi)\Big),
\end{equation}
and the equation of motion takes the form
\begin{equation}
    \anticommutator{D_{+}}{D_{-}}\phi+\onehalf\Lagrangian\phi=0.
\end{equation}
The instanton equation in Eq. \eqref{InstantonDE} now reduces to the topological chiral equation
\begin{equation}
    D_{\pm}\phi=0
    \label{topologicalchiralequation}.
\end{equation}

\subsubsection*{\ul{\it Projector formalism and the Lax pair}}

In this section, we introduce an alternative formulation of the $\mathbb{CP}^{N-1}$ model based on the projector $P$, and then present the associated Lax pair.

To satisfy the unit constraint in Eq. \eqref{unitconstraint}, it is convenient to express the field as
\begin{equation}
    \phi=\frac{f}{|f|},
\end{equation}
where $f$ is an arbitrary complex scalar multiplet. This transformation effectively converts the problem of a constrained field into that of an unconstrained one. We then define the projector
\begin{equation}
    P=\phi\phi^{\dagger}=\frac{ff^{\dagger}}{|f|^2},
\end{equation}
which satisfies the following identities:
\begin{subequations}
\begin{gather}
    \tr(P)=1,\\
    P^{\dagger}=P^2=P.
\end{gather}
\end{subequations}
With this, the Lagrangian of the model can be rewritten in terms of $P$, rather than $\phi$, as
\begin{equation}
    \Lagrangian=\onehalf\tr\Big((\partial^{\mu}P)(\partial_{\mu}P)\Big).
    \label{LagrangianProjectorformalism}
\end{equation}
The equation of motion in terms of $f$ is
\begin{equation}
    (1-P)\Big((\partial^{\mu}\partial_{\mu}f)-2(\partial_{\mu}f)\frac{[f^{\dagger}(\partial^{\mu}f)]}{|f|^2}\Big)=0
\end{equation}
or equivalently, at the matrix level,
\begin{equation}
    \commutator{(\partial^{\mu}\partial_{\mu}P)}{P}=0.
\end{equation}
We define the following object:
\begin{equation}
    K_{\mu}=\commutator{(\partial_{\mu}P)}{P},
\end{equation}
which satisfies
\begin{equation}
    K_{\mu}\phi=D_{\mu}\phi,
\end{equation}
allowing us to express the Lagrangian and the charge density as
\begin{equation}
    \Lagrangian=-\onehalf\tr(K^{\mu}K_{\mu})
\end{equation}
and
\begin{equation}
    \rho=\frac{i}{4\pi}\epsilon^{\mu\nu}\tr\Big(P\commutator{K_{\mu}}{K_{\nu}}\Big),
\end{equation}
respectively.

Since $K_{\mu}$ takes the form of a commutator, it is traceless. Additionally, given the Hermiticity of $P$, $K_{\mu}$ is anti-Hermitian. These two properties together imply
\begin{equation}\label{K_in_sun}
    K_{\mu}\in\mathfrak{su}(n),
\end{equation}
when working in real coordinates.

In complex coordinates, the Lagrangian, charge density, and equation of motion become
\begin{equation}
    \Lagrangian=-2\tr(K_{+}K_{-}),
    \label{LagrangianinK}
\end{equation}
\begin{equation}
    \rho=\frac{1}{\pi}\tr\Big(P\commutator{K_{+}}{K_{-}}\Big)
    \label{densitycomplex},
\end{equation}
and
\begin{equation}
    \partial_{+}K_{-}+\partial_{-}K_{+}=0,
\end{equation}
respectively. The energy-momentum tensor, in these coordinates, becomes
\begin{subequations}
\begin{gather}
    T_{\pm\pm}=-\tr(K_{\pm}^2),\\
    T_{\pm\mp}=0.
\end{gather}
\end{subequations}
One can verify that the system admits the Lax pair \cite{Zakharov1978RelativisticallyIT,SALEEM20171080,grundland2012soliton}
\begin{equation}
    L_{\pm}=\frac{2}{1\pm z}K_{\pm},
    \label{Lax1}
\end{equation}
where $z\in\mathbb{C}$ is the spectral parameter, given the identity
\begin{equation}
    \commutator{K_{+}}{K_{-}}=\onehalf\Big(\partial_{+}K_{-}-\partial_{-}K_{+}\Big).
\end{equation}
The topological chiral (instanton) equation written in terms of the projector $P$ is
\begin{equation}
    P\partial_{\pm}P=0
    \label{instanconK}.
\end{equation}

\subsection{Generalities on soliton surfaces}\label{sec:generalities_soliton}

Many features of integrable $2d$ sigma models can be reinterpreted as geometrical statements about objects known as \emph{soliton surfaces} or \emph{Sym-Tafel surfaces}.\footnote{While the Sym-Tafel formula is a simple and effective method for constructing soliton surfaces, there exist other approaches such as the generalized Weierstrass and Fokas-Gel'fand formulas for immersion; see \cite{https://doi.org/10.1002/sapm19969619,grundland2012soliton,grundland2016immersionformulassolitonsurfaces,Grundland2016generalizedsymmetryapproach,Grundland_2011integrableequations,talukdar2024fokaslenellsderivativenonlinearschrodinger,Bertrand_2016,Bertrand_2017} for details.} In this subsection, we will review some facts about the Sym-Tafel construction in order to motivate our study of the soliton surfaces of the $\mathbb{CP}^{N-1}$ model later in this work. Because many of these results hold for a broader class of theories than the $\mathbb{CP}^{N-1}$ model, for now we will work more generally.

Consider a classically integrable $2d$ field theory, which means that its equations of motion are equivalent to the flatness of a Lax connection $L_\mu ( x , z )$,
\begin{align}\label{bad_sign_lax}
    \partial_\mu L_\nu - \partial_\nu L_\mu - [ L_\mu , L_\nu ] = 0 \, , 
\end{align}
which holds for every value of a spectral parameter $z \in \mathbb{C}$ and where $x$ denotes the coordinates on the $2d$ spacetime $\Sigma$. We note that equation (\ref{bad_sign_lax}) corresponds to the condition $d_{L} L = 0$ in conventions where the covariant exterior derivative is defined by $d_L X = d X - L \wedge X$, which differs by a sign from another common convention $\tilde{d}_L X = d + L \wedge X$. One can translate between the two conventions by reversing the sign of the Lax connection, defining $\mathfrak{L}_\mu = - L_\mu$, which then satisfies the on-shell flatness condition 
\begin{align}\label{good_sign_lax}
    \partial_\mu \mathfrak{L}_\nu - \partial_\nu \mathfrak{L}_\mu + [ \mathfrak{L}_\mu , \mathfrak{L}_\nu ] = 0 \, . 
\end{align}
We will use the convention (\ref{bad_sign_lax}) in this paper, although one must remember to convert by reversing the sign when comparing to other results such as \cite{Bielli:2024oif}.

Now assume that the Lax connection $L_\mu$ is valued in a Lie algebra $\mathfrak{g}$. Examples of such theories include (i) the principal chiral model (PCM), where $L_\pm$ is proportional to the Maurer-Cartan form $j_\pm = g^{-1} \partial_\pm g \in \mathfrak{g}$ in light-cone coordinates; (ii) several integrable deformations of the PCM, such as the PCM with Wess-Zumino term, the Yang-Baxter deformation, and the lambda deformation; (iii) the $\mathbb{CP}^{N-1}$ model, which one can see via the Lax connection (\ref{Lax1}) since $K_\mu \in \mathfrak{su} ( n )$ according to equation (\ref{K_in_sun}); and (iv) more general symmetric space sigma models (SSSMs).

Given such a Lax connection $L_\mu ( x, z )$, one can define a function $\Phi ( x, z )$ which is valued in $G$, the Lie group associated with the Lie algebra $\mathfrak{g}$, by solving the linear spectral problem
\begin{align}\label{aux_lin}
    L_\mu ( x, z ) \Phi ( x, z ) = \partial_\mu \Phi ( x, z ) \, .
\end{align}
We then consider a function $r$ defined by the Sym-Tafel formula,
\begin{align}\label{sym_tafel}
    r ( x, z ) = \Phi^{-1} ( x, z ) \partial_z \Phi ( x, z ) \, .
\end{align}
For each fixed choice of spectral parameter $z$, the function $r$ maps a spacetime point $x$ into the Lie algebra $\mathfrak{g}$. Let the tangent vectors to this surface be
\begin{align}\label{tangent_defn}
    r_\mu = \partial_\mu r \, ,
\end{align}
which can be straightforwardly computed,
\begin{align}\label{tangent_computation}
    r_\mu &= \partial_\mu \left( \Phi^{-1} \partial_z \Phi \right) \nonumber \\
    &= \left( - \Phi^{-1} ( \partial_\mu \Phi ) \Phi^{-1} \right) \partial_z \Phi + \Phi^{-1} \partial_{z} \partial_{\mu} \Phi \nonumber \\
    &= - \Phi^{-1} L_\mu \partial_z \Phi + \Phi^{-1} ( \partial_z \mathfrak{L}_\mu ) \Phi + \Phi^{-1} \mathfrak{L}_\mu \partial_z \Phi \nonumber \\
    &= \Phi^{-1} ( \partial_z L_\mu ) \Phi \, , 
\end{align}
where we have used the defining property (\ref{aux_lin}) to simplify derivatives of $\Phi$.

Since $r_\mu = \mathrm{Ad}_\Phi^{-1} ( \partial_z L_\mu )$, given any Ad-invariant bilinear form $\langle \, \cdot \, , \, \cdot \rangle$ on $\mathfrak{g}$, the pull-back $g_{\mu \nu}$ of this form to the worldsheet $\Sigma$ can be written as
\begin{align}\label{pullback_metric}
    g_{\mu \nu} = \langle r_\mu , r_\nu \rangle = \langle \partial_z L_\mu , \partial_z L_\nu \rangle \, .
\end{align}
For instance, a common choice of bilinear pairing for matrix Lie algebras is
\begin{align}\label{trace_pairing}
    \langle X, Y \rangle = - \tr ( X, Y ) \, .
\end{align}
Because the Lax connection depends on the fields of the model, at any fixed value of the spectral parameter $z$, the pullback (\ref{pullback_metric}) defines a particular field-dependent metric on the worldsheet $\Sigma$. We will see field-dependent metrics play another role, in the context of auxiliary field deformations, in Section \ref{sec:field_dep_metric}.

The advantage of this construction is that it furnishes us with a dictionary between integrability statements and geometrical statements. As a simple example, on any smooth manifold, mixed partial derivatives must commute, so one has
\begin{align}
    \partial_\mu r_\nu = \partial_\mu \partial_\nu r = \partial_\nu \partial_\mu r = \partial_\nu r_\mu \, .
\end{align}
We will sometimes write $r_{\mu \nu} = \partial_\mu r_\nu$. Let us compute these. For any object $A$, one has
\begin{align}
    \partial_\mu \left( \Phi^{-1} A \Phi \right) &= \left( - \Phi^{-1} ( \partial_\mu \Phi ) \Phi^{-1} \right) A \Phi + \Phi^{-1} ( \partial_\mu A ) \Phi + \Phi^{-1} A \partial_\mu \Phi \nonumber \\
    &= \Phi^{-1} ( \partial_\mu A ) \Phi + \Phi^{-1} A {L}_\mu \Phi - \left( \Phi^{-1} ( {L}_\mu \Phi ) \Phi^{-1} \right) A \Phi  \nonumber \\
    &= \Phi^{-1} ( \partial_\mu A ) \Phi + \Phi^{-1} [ A, {L}_\mu ] \Phi \, .
\end{align}
Therefore,
\begin{align}\label{mixed_second}
    \partial_\mu r_\nu &= \partial_\mu \left( \Phi^{-1} ( \partial_z {L}_\nu ) \Phi \right) \nonumber \\
    &= \Phi^{-1} \left( \partial_\mu \partial_z {L}_\nu \right) \Phi + \Phi^{-1} [ \partial_z {L}_\nu , {L}_\mu ] \Phi \, .
\end{align}
Demanding equality of mixed second derivatives gives
\begin{align}\label{mixed_second_equal}
    0 &= \partial_\mu r_\nu - \partial_\nu r_\mu \nonumber \\
    &= \Phi^{-1} \left( \partial_\mu \partial_z {L}_\nu \right) \Phi + \Phi^{-1} [ \partial_z {L}_\nu , {L}_\mu ] \Phi - \left( \Phi^{-1} \left( \partial_\nu \partial_z {L}_\mu \right) \Phi + \Phi^{-1} [ \partial_z {L}_\mu , {L}_- ] \Phi  \right) \nonumber \\
    &= \Phi^{-1} \left( \partial_z \left( \partial_\mu {L}_\nu - \partial_\nu {L}_\mu - [ {L}_\mu , {L}_\nu ] \right) \right) \Phi \, .
\end{align}
This condition is simply the $z$-derivative of the flatness condition of the Lax connection, in the sign conventions (\ref{bad_sign_lax}), which automatically holds on-shell if the theory is integrable. 

Equality of mixed second derivatives is one consistency condition for a surface, but there are others. A two-dimensional surface $S$ embedded in $\mathbb{R}^n$ for $n \geq 3$ must satisfy the full set of classical structure equations, namely the Gauss equation describing the extrinsic curvature, the Codazzi-Mainardi relation for the covariant variation of the second fundamental form, and (in higher codimension) the Ricci equation for the curvature of the normal bundle.\footnote{Conversely, given an induced metric $g_{ij}$ on $S$, a second fundamental form $\tensor{h}{^a_i_j}$ symmetric in $i$ and $j$, and a metric normal connection $\tensor{\omega}{_i_a^b}$ satisfying the Gauss, Codazzi-Mainardi, and Ricci equations, there always locally exists an immersion which realizes the data $(g, h, \omega)$, unique up to an ambient isometry.} The full set of classical structure equations is rather involved for $n > 3$, so in this work we will primarily restrict attention to the case $n = 3$ where the normal bundle is rank-$1$ and the equations simplify. However, we will mention in passing that for general Sym-Tafel immersions in $\mathbb{R}^n$ for $n \geq 3$, all of these consistency conditions are implied by the flatness of the Lax connection and thus emerge from the underlying integrability of the theory.

The Sym-Tafel construction also geometrizes other interesting features of integrable theories; we will mention one more. Suppose that the worldsheet $\Sigma$ has the topology of a spatial cylinder of length $L$. Recall that the monodromy matrix of the Lax connection is
\begin{align}
    M ( z ) = \overset{\longleftarrow}{\mathcal{P} \mathrm{exp}} \left( \int_0^L L_\sigma \, d \sigma \right) \, ,
\end{align}
where we choose coordinates $x^\mu = ( \tau , \sigma )$ so that $\sigma$ is the spatial coordinate, and $\mathcal{P}$ denotes the path-ordering operation. This quantity $M ( z )$ can be expanded around appropriate values of the spectral parameter $z$ in order to generate an infinite tower of non-local conserved charges. The logarithmic derivatives of this monodromy matrix are related to the periods of the Sym-Tafel surface, since by properties of the path-ordered exponential,
\begin{align}
    \oint \frac{\partial r}{\partial \sigma} \, d \sigma = \int_0^L \Phi^{-1} ( \partial_z L_\sigma ) \Phi \, d \sigma = M ( z )^{-1} \partial_z M ( z ) \, .
\end{align}
Therefore, geometric data about the soliton surface directly encodes the generating function of the infinite tower of conserved charges in the model.

\subsection{Soliton surfaces of the $\mathbb{CP}^{N-1}$ model}\label{STCPN}

We now specialize to the case of the $\mathbb{CP}^{N-1}$ model. Several previous studies have investigated immersions for this model and its generalizations. In \cite{grundland2006description,grundland2008conformally,grundland2008surfaces,grundland2012soliton} the authors constructed surfaces associated with the $\mathbb{CP}^{N-1}$ model using the generalized Weierstrass and the Fokas-Gel’fand approaches, which were extended to Grassmannian sigma models in \cite{grundland2005description,delisle2015geometry}. It was argued in \cite{grundland2009analytic} that the generalized Weierstrass method produces a surface which is equivalent to the Sym-Tafel immersion used in this work, so geometrical data agree between immersion methods. We will not undertake a detailed comparison of our discussion here with the results of these previous works, since our goal is primarily to present the properties of the Sym-Tafel surface for the undeformed $\mathbb{CP}^{N-1}$ model to continue setting our notation and to provide a base case to contrast with deformations of this model. Later, in Section~\ref{s:defsolaux}, we will combine this framework with the auxiliary field formalism to examine how higher-spin deformations affect the geometry of the $\mathbb{CP}^{N-1}$ soliton surfaces.

We follow the construction of Section \ref{sec:generalities_soliton}, focusing on the case $\Phi\in SU(N)$, $L_{\mu}\in\mathfrak{su}(N)$. With the trace (\ref{trace_pairing}) as our bilinear form, the metric tensor of the soliton surface is
\begin{equation}
    g_{\mu\nu}=-\tr\Big((\partial_{z}L_{\mu})(\partial_{z}L_{\nu})\Big)
    \label{solitonsurfacemetrictensor}.
\end{equation}
Substituting the $\Cp{N-1}$ Lax pair \eqref{Lax1} into \eqref{solitonsurfacemetrictensor} gives the following expression for the metric on the soliton surface:
\begin{subequations}\label{pullback_killing_cartan}
\begin{gather}
    g_{\pm\mp}=\frac{4}{(1-z^2)^2}\tr(K_{+}K_{-})=-\frac{2}{(1-z^2)^2}\Lagrangian,\\[0.4em]
    g_{\pm\pm}=-\frac{4}{(1\pm z)^4}\tr(K_{\pm}^2)=\frac{4}{(1\pm z)^4}T_{\pm\pm}.
\end{gather}
\end{subequations}
The two non-vanishing components of the energy-momentum tensor become proportional to their counterparts in the metric tensor, while the vanishing diagonal part is replaced by the Lagrangian. In this case, the essential physical quantities are encoded geometrically on the soliton surface. It is worth emphasizing that, even though the surface constructed from the Sym-Tafel formula is a soliton surface, it does not isolate all the topological chiral features from others, as the Lax pair corresponds to the equation of motion, not just the topological chiral (instanton) branch. Note that the energy-momentum tensor proves to be identically zero when evaluated on the instanton solutions, a property that we will use in later parts of the paper.

The geometry of soliton surfaces and their curvatures is simplest to study when the Lie algebra is $\mathfrak{su}(2)$ (the $\mathbb{CP}^{1}$ case), in which case the normal bundle to the surface has rank 1, making it possible to define normal vectors, cross products, and related geometric notions. Thus, for the rest of this section, we will focus on the $\mathbb{CP}^{1}$ case instead of the general $\mathbb{CP}^{N-1}$ for simplicity. In this case, the embedding (\ref{sym_tafel}) takes the form
\begin{equation}\label{start_restricting_here}
    r : \Sigma \longrightarrow \mathbb R^3, 
\end{equation}
so that $r$ is an $\mathfrak {su}(2)\cong \mathbb R^3$ valued matrix, and the metric tensor becomes the first fundamental form of the surface. Furthermore, for $2d$ surfaces embedded into $3d$, the classical structure equations described in the preceding subsection simplify. In this setting, the Gauss-Mainardi-Codazzi (GMC) equations relate second derivatives of $r $ with the Christoffel symbols $\Gamma^\mu _{\nu\rho}$ of $g$ and the second fundamental form $h_{\mu\nu}$, which encodes the geometry of the surface normal to the soliton surface, as
\begin{equation}
    \del _\mu \del _\nu \,r = \Gamma ^\sigma _{\mu \nu }\del _\sigma  r + h_{\mu \nu }\, \widehat n \, .
\end{equation}
Here $\widehat n$ is the normal vector 
\begin{equation}
    \widehat n = \frac {[\del _1 r, \del _2 r ]}{|[\del _1 r, \del _2 r ]|}.
\end{equation}
We have already computed the second derivatives $r_{\mu \nu} = \partial_\mu \partial_\nu r$ in equation (\ref{mixed_second}), and similarly the unit normal is
\begin{equation}
     \widehat n = \big |[\del _z L _+, \del _z L_- ]\big |^{-1}\, \Phi ^{-1}\, [\del _z L _+, \del _z L_- ]\, \Phi \, .
\end{equation}
The cross product in this case is given by the commutator, and thus the normal vector is
\begin{equation}
    n=r_{+}\times r_{-}=-\frac{4}{(1-z^2)^2}\Phi^{-1}\commutator{K_{+}}{K_{-}}\Phi.
\end{equation}
For a two-dimensional soliton surface embedded in $\mathbb{E}^3$, the second fundamental form is
\begin{equation}\label{second_fundamental}
    h_{\mu\nu}=r_{\mu\nu}\cdot\hat{n}=\frac{-\tr\Big((\partial_{z}\partial_{\nu}L_{\mu}+\commutator{\partial_{z}L_{\mu}}{L_{\nu}})\commutator{\partial_{z}L_{+}}{\partial_{z}L_{-}}\Big)}{\sqrt{-\tr(\commutator{\partial_{z}L_{+}}{\partial_{z}L_{-}}^2)}},
\end{equation}
and for the $\mathbb{CP}^{1}$ case, it becomes
\begin{subequations}
\begin{gather}
    h_{\pm\pm}=\frac{\mp2}{(1\pm z)^2}\frac{\tr(\commutator{K_{+}}{K_{-}}\partial_{\pm}K_{\pm})}{\sqrt{-\tr(\commutator{K_{+}}{K_{-}}^2)}},\\[0.4em]
    h_{\pm\mp}=\frac{2}{1-z^2}\sqrt{-\tr(\commutator{K_{+}}{K_{-}}^2)}.
\end{gather}
\end{subequations}
The Gaussian and mean curvatures are
\begin{equation}
    \begin{split}
   K&=\det(g^{-1}h)
   =\frac{(1-z^2)^2}{4}\frac{\tr(\commutator{K_{+}}{K_{-}}\partial_{+}K_{+})\tr(\commutator{K_{+}}{K_{-}}\partial_{-}K_{-})+\tr(\commutator{K_{+}}{K_{-}}^2)^2}{\tr(\commutator{K_{+}}{K_{-}}^2)\Big(\tr(K_{+}^2)\tr(K_{-}^2)-\tr(K_{+}K_{-})^2\Big)}\\
   &=-\frac{(1-z^2)^2}{2},\\
    H&=\onehalf\tr(g^{-1}h)\\
    &=\frac{(1-z^2)^2}{4}\frac{\frac{\tr(K_{-}^2)}{(1-z)^2}\tr(\commutator{K_{+}}{K_{-}}\partial_{+}K_{+})-\frac{\tr(K_{+}^2)}{(1+z)^2}\tr(\commutator{K_{+}}{K_{-}}\partial_{-}K_{-})+2\tr(K_{+}K_{-})\frac{\tr(\commutator{K_{+}}{K_{-}}^2)}{1-z^2}}{\sqrt{-\tr(\commutator{K_{+}}{K_{-}}^2)}\Big(\tr(K_{+}^2)\tr(K_{-}^2)-\tr(K_{+}K_{-})^2\Big)},
    \end{split}
    \label{Curvoriginal}
\end{equation}
where the mean curvature reduces to a constant,
\begin{equation}
    H=-\frac{i(z^2-1)}{\sqrt{2}}
\end{equation}
for configurations obeying the topological chiral condition \eqref{instanconK}, in agreement with \cite{grundland2012soliton}.

To conclude this section, we state the formula for the surface area:
\begin{equation}
    A=\int\sqrt{\det(g)}dx^{+}dx^{-}=\frac{4}{(1-z^2)^2}\int\sqrt{T_{++}T_{--}-\frac{\Lagrangian^2}{4}}dx^{+}dx^{-},
\end{equation}
where the stress tensor and the Lagrangian appear through the metric tensor $g$.

Everything we discussed until this point concerns the undeformed $\Cp {N-1}$ model. Our goal will be to see how integrable higher spin deformations affect the physics of this theory, and this is what we will focus on from now on. We begin in the next section by analysing the simplest deformation of this type: $T\ov T$.

\section{$T\overbar{T}$ deformation of the $\mathbb{CP}^{N-1}$ model}\label{s:ttbcpn}

The $\TT$ deformation \cite{Zamolodchikov:2004ce,Smirnov:2016lqw,Cavaglia:2016oda} is a universal, solvable deformation of any two-dimensional quantum field theory. By ``universal'' we mean that it can be defined, even quantum-mechanically, in any $2d$ QFT which enjoys translation invariance, by virtue of the fact that the coincident point limit
\begin{align}
    \mathcal{O}_{\TT} ( x ) = \lim_{y \to x} \left( T^{\mu \nu} ( x ) T_{\mu \nu} ( y ) - \tensor{T}{^\mu_\mu} ( x ) \tensor{T}{^\nu_\nu} ( y ) \right)
\end{align}
gives rise to a well-defined local operator, up to total derivative terms that can be ignored when integrated over spacetime. By ``solvable'' we mean that deforming a seed theory by the integrated operator $\mathcal{O}_{\TT}$ according to the flow equation 
\begin{align}\label{TT_flow_equation}
    \frac{\partial S ( \gamma )}{\partial \gamma} = \frac{1}{2} \int d^2 x \, \mathcal{O}_{\TT}^{(\gamma)} \, ,
\end{align}
where $\mathcal{O}_{\TT}^{(\gamma)}$ is constructed from $S ( \gamma )$, produces a modified theory for which many properties can be related to those of the undeformed theory. In particular, and most relevant for our purposes, the $\TT$ deformation preserves integrability when applied to an integrable seed theory \cite{Smirnov:2016lqw}. At the classical level, and restricting to certain integrable sigma models, this fact can be understood by realizing the $\TT$ flow as a special case of certain auxiliary field deformations to be explained in Section \ref{ss:auxrev}. Rather than presenting an exhaustive overview of the $\TT$ deformation, we refer the reader to the reviews \cite{Jiang:2019epa,He:2025ppz} for further details.

We will be interested in the classical $T\ov T$-deformed $\mathbb{CP}^{N-1}$ model. There are several equivalent ways to define the seed theory. If one parameterizes the scalars $\phi$ of (\ref{scalar_defn}) directly using coordinates $X^m$ on $\mathbb{CP}^{N-1}$, the theory can be presented in sigma model form
\begin{align}\label{sigma_model_form}
    S = \int d^2 x \, g_{mn} ( X ) \partial_\mu X^m \partial^\mu X^n \, ,
\end{align}
where $g_{mn} ( X )$ is the metric on $\mathbb{CP}^{N-1}$. The $\TT$ deformation of an action (\ref{sigma_model_form}) with generic metric follows from the analysis of \cite{Cavaglia:2016oda} and involves a square root. Because the degrees of freedom $X^m$ in the deformed model still parameterize points on $\mathbb{CP}^{N-1}$ by construction, the unit constraint (\ref{unitconstraint}) is manifestly undeformed in this presentation.

A different way to define the undeformed $\mathbb{CP}^{N-1}$ model is using the fields $\phi$ with Lagrangian (\ref{CPNLagrangian1}), supplemented by a Lagrange multiplier term which enforces this unit constraint (\ref{unitconstraint}). Viewing this Lagrange multiplier term as a potential, the $\TT$ deformation of the theory in this formulation is also of the general form considered in \cite{Cavaglia:2016oda}. Because this presentation is equivalent to the definition (\ref{sigma_model_form}), it is natural to expect the $\TT$ deformations of the two seed theories to agree, and in particular that the unit constraint also remains unmodified along this flow. However, we find it worthwhile to verify this expectation explicitly, since theories which are equivalent in flat spacetime may nonetheless couple differently to the metric and therefore have different-looking stress tensors and $\TT$ flows. As a simple example, a free massless boson in $2d$ is equivalent to a free massless Dirac fermion (by the process of bosonization/fermionization) but the $\TT$ deformation of a free boson gives the Nambu-Goto action while the classical $\TT$ deformation of a free fermion merely introduces a four-fermion interaction, and these deformed theories appear rather different.

We will therefore directly study the $\TT$-deformed $\mathbb{CP}^{N-1}$ model in the $\phi$ variables and examine whether the unit constraint (\ref{unitconstraint}) can be modified along the flow, finding that it cannot. Because the $\TT$-deformed action for a collection of scalars involves a square root, we will also verify that the argument of this square root remains non-negative for topological chiral solutions of our interest, which serves as a preliminary check that these field configurations have well-defined actions in the deformed theory.

\subsection{Deformation of the Lagrangian}

We will write the defining flow equation (\ref{TT_flow_equation}) as
\begin{equation}
    \frac{\partial\Lagrangian(\gamma)}{\partial\gamma}=\TTbar(\gamma)
    \label{TTbardeformation},
\end{equation}
where $\TTbar(\gamma)$, constructed from the energy-momentum tensor, is the scalar that drives the deformation, and $\gamma$ is the deformation parameter.
Since the full theory depends on $\gamma$, we adopt the shorthand convention that when $\gamma=0$, the parameter is omitted from all tensors. For example,
\begin{equation}
    T^{\mu}_{\ \nu}=T^{\mu}_{\ \nu}(\gamma=0)
\end{equation}
denotes the energy-momentum tensor of the seed (undeformed) system.
Given an undeformed system, we express $\TTbar(\gamma)$ using the method in \cite{Ferko:2022cix}, and then solve the differential equation in Eq. \eqref{TTbardeformation} to obtain the deformed Lagrangian. In $2d$, the $\TTbar$ operator is defined as the determinant of the energy-momentum tensor and takes the form
\begin{equation}
    \TTbar:=\det(T_{\mu\nu})=\onehalf\epsilon^{\mu\nu}\epsilon^{\alpha\beta}T_{\mu\alpha}T_{\nu\beta},
\end{equation}
which can be rewritten as
\begin{equation}
    \TTbar=\onehalf\Big(T^{\mu}_{\ \nu}T^{\nu}_{\ \mu}-(T^{\mu}_{\ \mu})^2\Big).
    \label{TMinkowskispace}
\end{equation}
Define
\begin{equation}
    X^{\mu\nu}:=(D^{\mu}\phi)^{\dagger}(D^{\nu}\phi)+(D^{\nu}\phi)^{\dagger}(D^{\mu}\phi),
\end{equation}
which is symmetric and real. From this, we introduce two scalar quantities:
\begin{subequations}
\begin{gather}
    X:=X^{\mu\nu}X_{\mu\nu},\\
    Y:= \tensor{X}{^\mu_\mu} =2(D^{\mu}\phi)^{\dagger}(D_{\mu}\phi)=2\Lagrangian.
\end{gather}
\end{subequations}
With these definitions, the energy-momentum tensor given in Eq. \eqref{energymomentumtensor1} can be expressed as
\begin{equation}
    T^{\mu\nu}=X^{\mu\nu}-g^{\mu\nu}\frac{Y}{2}.
\end{equation}
Since $X$ and $Y$ are the only scalar quantities that can be constructed from the energy-momentum tensor, they constitute the only possible contributions to $\TTbar(\gamma)$ for any value of $\gamma$. This leads to
\begin{equation}
    T^{\mu}_{\ \nu}(\gamma)=4X^{\mu\theta}X_{\theta\nu}\Lagrangian_{X}(\gamma)+2X^{\mu}_{\ \nu}\Lagrangian_{Y}(\gamma)-\delta^{\mu}_{\nu}\Lagrangian(\gamma),
    \label{Noetherenergymomentumtensorgamma}
\end{equation}
where
\begin{equation}
    \Lagrangian_{X}(\gamma):=\partial_{X}\Lagrangian(\gamma),\quad\Lagrangian_{Y}(\gamma):=\partial_{Y}\Lagrangian(\gamma).
\end{equation}
It is straightforward to verify that Eq. \eqref{Noetherenergymomentumtensorgamma} reduces to its counterpart in the seed theory when $\gamma=0$. In this case, we have $\Lagrangian_{X}(\gamma=0)=0$ and $\Lagrangian_{Y}(\gamma=0)=1$. Using Eq. \eqref{Noetherenergymomentumtensorgamma}, we compute $\TTbar(\gamma)$, which takes the form:
\begin{equation}
    \TTbar(\gamma)=-4(X-Y^2)^2\Lagrangian_{X}^2+2(X-Y^2)\Lagrangian_{Y}^2+4Y(X-Y^2)\Lagrangian_{X}\Lagrangian_{Y}+4X\Lagrangian_{X}\Lagrangian+2Y\Lagrangian_{Y}\Lagrangian-\Lagrangian^2,
\end{equation}
where all the occurrences of $\Lagrangian$ and its derivatives are understood to be functions of $\gamma$, which we suppress for notational convenience.

The deformed Lagrangian, as the solution to the partial differential equation in Eq. \eqref{TTbardeformation}, takes the form
\begin{equation}
    \Lagrangian(X,Y,\gamma)=\frac{1-\sqrt{1-2\gamma Y+2\gamma^2(Y^2-X)}}{2\gamma}
    \label{deformedLagrangian1}.
\end{equation}
If the constraint is embedded into the Lagrangian, then the deformed system takes the modified form
\begin{equation}
    \Lagrangian(X,Y,K,\gamma)=\frac{1-K\gamma-\sqrt{1-2\gamma(1-\frac{K\gamma}{2})Y+2\gamma^2(Y^2-X)(1-\frac{K\gamma}{2})^2}}{2\gamma(1-\frac{K\gamma}{2})}
    \label{deformedLagrangian2},
\end{equation}
where
\begin{equation}
    K=2\lambda(\phi^{\dagger}\phi-1),
\end{equation}
where $\lambda$ is the Lagrange multiplier.

Having reviewed the construction of the integrated, $T\ov T$-deformed Lagrangian for the $\Cp {N-1}$ model, we will now analyse the interplay between the unit constraint and the flow. 

\subsection{Deformation of the unit constraint}
In this section, we apply Picard iteration to examine how the deformation modifies the unit constraint in Eq. \eqref{unitconstraint}. For the undeformed Lagrangian with the unit constraint embedded, the $\TTbar$ operator takes the form
\begin{equation}
    \TTbar=\frac{X}{2}-\frac{Y^2}{4}-\frac{K^2}{4},
\end{equation}
which leads to the first-order corrected Lagrangian:
\begin{equation}
    \Lagrangian_{1}(X,Y,K,\gamma)=\frac{Y}{2}-\frac{K}{2}+\gamma\Big(\frac{X}{2}-\frac{Y^2}{4}-\frac{K^2}{4}\Big)
    \label{Lagrangian1stordercorrection}.
\end{equation}
At this order, there is no additional term linear in $K$, which implies that no new contribution arises to leading order in $\gamma$. This indicates the constraint remains unchanged at first order.

We then recompute the $\TTbar$ operator using the corrected Lagrangian (\ref{Lagrangian1stordercorrection}) and obtain
\begin{equation}
\begin{split}
    \TTbar_{1}(\gamma)&=-\frac{K^4\gamma^2}{16}-\frac{K^3\gamma}{4}-\frac{K^2X\gamma^2}{4}+\frac{K^2Y^2\gamma^2}{8}-\frac{K^2}{4}-\frac{KX\gamma}{2}\\
    &+\frac{KY^2\gamma}{4}-\frac{X^2\gamma^2}{4}+\frac{3XY^2\gamma^2}{4}+XY\gamma+\frac{X}{2}-\frac{5Y^4\gamma^2}{16}-\frac{Y^3 \gamma}{2}-\frac{Y^2}{4},
\end{split}    
\end{equation}
which gives the second-order corrected Lagrangian:
\begin{equation}
\begin{split}
    \Lagrangian_2(\gamma)&=\frac{Y}{2} - \frac{K}{2} + \gamma \Big(\frac{X}{2} - \frac{Y^2}{4} - \frac{K^2}{4}\Big) + \gamma^2\Big(-\frac{1}{8} K^3-\frac{1}{4} K X + \frac{1}{8} K Y^2 + \frac{1}{2} X Y - \frac{1}{4} Y^3\Big)\\
    &+\gamma^3\Big(-\frac{1}{48} K^4 - \frac{1}{12} K^2 X + \frac{1}{24} K^2 Y^2 - \frac{1}{12} X^2 + \frac{1}{4} X Y^2 - \frac{13}{128} Y^4\Big).
\end{split}
\end{equation}
At this order, new terms linear in $K$ appear compared to the original Lagrangian, indicating a possible modification of the constraint at second order in the deformation.

Gathering all terms linear in $K$, we obtain
\begin{equation}
    K \left( -\onehalf-\gamma^2\frac{X}{4}+\gamma^2\frac{Y^2}{8} \right) =\lambda(\phi^{\dagger}\phi-1) \left( -1-\gamma^2\frac{X}{2}+\gamma^2\frac{Y^2}{4} \right)
    \label{newconstraint1}.
\end{equation}
Since the unit constraint is factored out in Eq. \eqref{newconstraint1}, we are free to retain it if desired. However, the remaining factor could be interpreted as modifying the original constraint, suggesting the possibility of adopting a new constraint of the form
\begin{equation}
    \gamma^2\Big(\frac{Y^2}{4}-\frac{X}{2}\Big)=1.
    \label{newconstraint2}
\end{equation}
It turns out that the expression inside the parentheses is equal to $-\TTbar(\gamma=0)$, implying that Eq. \eqref{newconstraint2} would  force $\TTbar\rightarrow-\infty$ as $\gamma\rightarrow0$. This contradicts the seed theory and therefore indicates that this new constraint should not be adopted. Therefore, we conclude that the unit constraint is preserved under the deformation, with no enhanced constraints introduced, as expected from the deformation of the equivalent seed theory (\ref{sigma_model_form}).

While we have shown here that holonomic constraints cannot be enhanced for any sigma model, it would be interesting to explore how constrained Lagrangians of other classes of theories are deformed by $\TTbar$.
\subsection{Features of the deformed system}
We begin this section by recalling the Lagrangian of the deformed system given in Eq. \eqref{deformedLagrangian1},
which contains a square root term of the form
\begin{equation}\label{square_root_term}
    \sqrt{1-2\gamma Y+2\gamma^2(Y^2-X)}.
\end{equation}
A natural question that arises is whether this square root is always well-defined, that is, whether the expression under the square root remains non-negative for all physically relevant configurations.

In  Euclidean signature, and in
complex coordinates defined in Eq. \eqref{complexifiedcoordinate}, let $A=D_{+}\phi$ and $B=D_{-}\phi$. By the Cauchy-Schwarz inequality, $|A^{\dagger}B|^2\leq|A|^2|B|^2$, we obtain
\begin{equation}
    Y^2-X=8(|A|^2+|B|^2)^2-32|A^{\dagger}B|^2\geq8(|A|^2-|B|^2)^2\geq0.
\end{equation}
It then follows that
\begin{equation}
    1-2\gamma Y+2\gamma^2(Y^2-X)\geq-64\gamma^2|A^{\dagger}B|^2=-16\gamma^2T_{++}T_{--}
\end{equation}
in general, and
\begin{equation}
    1-2\gamma Y+2\gamma^2(Y^2-X)\geq0
\end{equation}
for topological chiral solutions where the undeformed energy-momentum tensor is zero. This demonstrates that, although the deformed Lagrangian is not guaranteed to be well-defined for arbitrary configurations of $\phi$, for topological chiral solutions it remains well-defined and, as we will explain later in this section, reduces to the undeformed theory.

The equation of motion and energy-momentum tensor of the deformed system are
\begin{equation}
\begin{split}
    &\partial_{\mu}\Big(\frac{2\gamma X^{\mu\nu}(D_{\nu}\phi_{i})+(1-2\gamma Y)(D^{\mu}\phi_{i})}{\sqrt{1-2\gamma Y+2\gamma^2(Y^2-X)}}\Big)-\frac{2\gamma X^{\mu\nu}[\phi^{\dagger}(\partial_{\nu}\phi)](D_{\mu}\phi_{i})+(1-2\gamma Y )[\phi^{\dagger}(\partial^{\mu}\phi)](D_{\mu}\phi_{i})}{\sqrt{1-2\gamma Y+2\gamma^2(Y^2-X)}}\\
    &-\frac{2\gamma X^{\mu\nu}\phi^{\dagger}(\partial_{\mu}D_{\nu}\phi)+(1-2\gamma Y)\phi^{\dagger}(\partial_{\mu}D^{\mu}\phi)}{\sqrt{1-2\gamma Y+2\gamma^2(Y^2-X)}}\phi_{i}=0
\end{split}
\label{deformedEOM}
\end{equation}
and
\begin{equation}
    T^{\mu}_{\ \nu}=\frac{2\gamma X^{\mu\theta}X_{\theta\nu}+(1-2\gamma Y)X^{\mu}_{\ \nu}}{\sqrt{1-2\gamma Y+2\gamma^2(Y^2-X)}}-\delta^{\mu}_{\nu}\frac{1-\sqrt{1-2\gamma Y+2\gamma^2(Y^2-X)}}{2\gamma}
    \label{deformedenergymomentumtensor},
\end{equation}
respectively.

Although the equation of motion appears complicated, one can verify that the topological chiral solutions remain valid when substituted into it. This validity is further supported by the behavior of the energy-momentum tensor, which vanishes entirely for the topological chiral solutions (see Appendix \ref{Expcheck} for the full computation). In \cite{Conti_2019}, the $\TTbar$ deformation was shown to be a field-dependent local coordinate transformation of the form
\begin{equation}\label{TT_field_dep_diff}
    \phi^{(\gamma)}(x)=\phi^{(0)}(y(x)),
\end{equation}
where
\begin{equation}
    dy^{\mu}=\Big(\delta^{\mu}_{\ \nu}+\gamma(\tilde{T}^{(\gamma)})^{\mu}_{\ \nu}(x)\Big)dx^{\nu},
\end{equation}
with
\begin{equation}
    (\tilde{T}^{(\gamma)})^{\mu}_{\ \nu}=-\epsilon^{\mu}_{\ \rho}\epsilon^{\sigma}_{\ \nu}(T^{(\gamma)})^{\rho}_{\ \sigma}.
\end{equation}
If the energy-momentum tensor vanishes, the coordinate transformation $dy^\mu = dx^\mu$ is at most a constant shift $y^\mu = x^\mu + a^\mu$, and by translation invariance of the undeformed theory (which we always assume, so that the seed theory has a conserved stress tensor), the original solution remains a solution in the deformed model.

Although the deformed Lagrangian (\ref{deformedLagrangian1}) satisfies the $\TT$ flow equation to all orders in $\gamma$ -- and not merely to some finite order in a perturbative expansion -- we will be interested in studying physical properties of the deformed theory which admit smooth limits to the undeformed theory. Therefore, we will impose one constraint on allowed values of the deformation parameter $\gamma$. For topological chiral solutions, one has
\begin{equation}
    1-2\gamma Y+2\gamma^2(Y^2-X)=(1-\gamma Y)^2,
\end{equation}
and to ensure the deformed Lagrangian reduces to the undeformed one, we require
\begin{equation}
    0\leq\gamma\leq\frac{1}{Y}.
\end{equation}
The $\mathbb{CP}^{N-1}$ model serves as a $2d$ analogue of $4d$ Yang–Mills theory, as previously discussed. The fact that the energy-momentum tensor vanishes for topological chiral solutions ensures that these solutions are preserved under the $\TTbar$ deformation, effectively leaving the system undeformed in this sector. This naturally raises the question of whether such a feature extends to other systems in $4d$, or even in arbitrary dimensions, provided they admit solutions with vanishing energy-momentum tensors. The answer is yes, and we will present a general proof in the next section. This extends the result first given in \cite{Ferko:2024yua} for the $4d$ Yang-Mills instantons and deformations quadratic in the energy-momentum tensor.

\section{Theorem about non-deformation of instantons}\label{s:instpres1}
In this section, we will show using quite general arguments that solutions with a trivial energy-momentum tensor are preserved by $T\ov T $-like deformations under mild analyticity conditions on the deformation. More specifically, let us consider a theory with Lagrangian $\cL^{(\gamma)}$ satisfying the following flow
\bea
\frac{\partial\cL^{(\gamma)}}{\partial\gamma}=f(T_{\mu\nu})
\label{appendix_proof_flow}
~,
\eea
where $f$ is analytic in $T_{\mu \nu}$, and $\gamma$ is the coupling associated with these kinds of flows. 
As anticipated, we further assume that $T_{\mu \nu}$ evaluated on the solution we are considering vanishes, namely
\begin{align}\label{on_shell_T_vanish}
T^*_{\mu\nu}=0,
\end{align}
where ``$*$'' is a reminder that the energy momentum tensor is on-shell. This is the case for (anti)-instantons in euclidean Yang-Mills for example, and it also holds for instanton solutions in the $\Cp {N-1}$ model, as explained earlier.   
The property of the stress-energy tensor being zero on-shell is quite interesting, and we will argue that this property can be used to generate solutions for any $\gamma$, with the aid of some assumptions on $f(T_{\mu\nu})$. Specifically, 
we will assume that $f$ is analytic in $T$, with two further conditions 
\bsubeq\label{a-i-ii}
\bea
f(0)=f(T_{\mu\nu})\Bigg|_{T_{\rho\tau}=T^*_{\rho\tau}=0}=0
,
\\[0.4em]
\frac{\pa f (0)}{\pa T_{\rho\tau}}:=\frac{\pa f (T_{\mu\nu})}{\pa T_{\rho\tau}}\Bigg|_{T_{\rho\tau}=T^*_{\rho\tau}=0}=0
.
\eea
\esubeq
Note that conditions \eqref{a-i-ii} when applied to the case $T^*_{\mu\nu}=0$, if $f$ is analytic, imply that $f$ is at least quadratic in $T_{\mu\nu}$, namely
\bea
f_{{T\overbar{T}}}(T_{\mu\nu})=a\Big(T^{\mu\nu}T_{\mu\nu}-r (T^\mu{}_\mu)^2\Big)+ O(T_{\mu\nu}^3)
.
\label{operator}
\eea
The argument we are going to present solely relies on the previous two assumptions \eqref{a-i-ii}, it holds in any space-time dimensions, and generalises the result of \cite{Ferko:2024yua}, which leveraged the metric approach. Let us denote the fields in the theory by $\varphi (x)$.
If $\varphi^*$ is a solution of the EOM for an undeformed Lagrangian $\cL(\varphi)$ such that  \eqref{a-i-ii} are satisfied for $T^*_{\mu\nu}=T_{\mu\nu}(\varphi^*)$, then $\varphi^*$ is also a solution of the EOM of the deformed Lagrangian  $\cL^{(\gamma)}(\varphi)$ such that $T^{(\gamma)}{}^*_{\mu\nu}=T^*_{\mu\nu}$, \textit{at first order in $\gamma$}. Moreover, the Lagrangian evaluated on the solution is not modified $\cL^{(\gamma)}(\varphi^*)=\cL(\varphi^*)$, again, at least at first order in $\gamma$. The proof is almost immediate, we have
\bea
\cL^{(\gamma)}(\varphi)
=
\cL(\varphi)
+\gamma f_{T\overbar{T}}(T_{\mu\nu})
+\cdots
\eea
the equations of motion of the previous action are schematically
\bea\label{eq:deltaphi}
\frac{\d\cL^{(\gamma)}(\varphi)}{\d\varphi}
=
\frac{\d\cL(\varphi)}{\d\varphi}
+\gamma\frac{\pa f_{T\overbar{T}}(T_{\mu\nu})}{\pa T_{\rho\tau}}\frac{\d T_{\rho\tau}}{\d \varphi}
+\cdots
,
\eea
where we slightly abuse notation and assume that $\delta \varphi$ also includes functional derivatives along partial derivatives of the fields, namely terms like $\delta (\del \varphi)$.
By setting \eqref{eq:deltaphi} to zero, we can schematically write the equations of motion for $\cL^{(\gamma)}$ as
\bea
0
={\rm EOM}[\cL^{(\gamma)},\varphi]
=
{\rm EOM}[\cL,\varphi]
+\l\mathbb{O}_{\rho\tau}\frac{\pa f_{T\overbar{T}}(T_{\mu\nu})}{\pa T_{\rho\tau}}
+\cdots
~.
\eea
Here we denoted with ${\rm EOM}[\cL^{(\gamma)},\varphi]$ and ${\rm EOM}[\cL,\varphi]$ the equations of motion that derive from the variations of the Lagrangians $\cL^{(\gamma)}$ and $\cL$ with respect to the fields $\varphi$, respectively. Moreover, we denoted with
$\mathbb{O}_{\rho\tau}:=\mathbb{O}_{\rho\tau}\left[\frac{\d T_{\rho\tau}}{\d \varphi}\right]$ the operator that arises from the variation of $\frac{\d T_{\rho\tau}}{\d \varphi}$ which could include multiplicative and derivative factors, the latter would arise from integration by parts of terms involving the variation of derivatives of $\varphi$. The explicit form of $\mathbb{O}_{\mu\nu}$ is not important, what matters for us is that it is a linear operator, hence it is such that
$$
\mathbb{O}_{\mu\nu}0=0
.
$$
At this point, it is clear that evaluated on a solution such that
$$
{\rm EOM}[\cL,\varphi^*]=0
,~~~~~~
\frac{\pa f (0)}{\pa T_{\rho\tau}}=\frac{\pa f (T_{\mu\nu})}{\pa T_{\rho\tau}}\Bigg|_{T_{\rho\tau}=T^*_{\rho\tau}=0}=0
,
$$
the variation \eqref{eq:deltaphi} gives
\bea
{\rm EOM}[\cL^{(\gamma)},\varphi^*]
=
{\rm EOM}[\cL,\varphi^*]
+\l\mathbb{O}_{\rho\tau}\frac{\pa f_{T\overbar{T}}(0)}{\pa T_{\rho\tau}}
+\cdots
=0+0+\cdots
,
\eea
hence, $\varphi^*$ is a solution of the equations of motion of the Lagrangian $\cL^{(\gamma)}(\varphi)$, deformed at first order in $\l$. Since $f(0)=0$, we also see that 
$$
\cL^{(\gamma)}(\varphi^*)=\cL(\varphi^*)
,
$$
to first order in $\l$. Moreover, we can compute the Hilbert stress-energy tensor
\bea
  T^{(\gamma)}_{\mu \nu} 
  = 
  - 2 \frac{\partial \cL^{(\gamma)}(\varphi)}{\partial g^{\mu \nu}} 
  + g_{\mu \nu} \cL^{(\gamma)}(\varphi)
  \eea
and going on-shell for $\varphi^*$ we obtain
\bea
    T^*{}^{(\gamma)}_{\mu \nu} 
    &=& 
T^*_{\mu\nu}
    - 2 \frac{\partial f_{T\overbar{T}}(T_{\rho\tau}(\varphi^*))}{\partial g^{\mu \nu}} 
        +\cdots
    \eea
where we used that $\cL^{(\gamma)}(\varphi^*)=\cL(\varphi^*)$ at first order, and that
\bea
  T^*_{\mu \nu} 
  = 
  - 2 \frac{\partial \cL(\varphi)}{\partial g^{\mu \nu}} 
  + g_{\mu \nu} \cL(\varphi)\bigg|_{\varphi=\varphi^*}
  .
  \eea
We now need to evaluate the term proportional to $\frac {\del f_{T\ov T}}{\del g}$ with some care and we will also be more specific on whether we are in flat Minkowski or Euclidean space,
meaning that we will set $g^{\mu\nu}=\eta^{\mu\nu}(\delta ^{\mu\nu})$ at the end of all calculations. This is important since, for now, we have assumed
$\varphi^*$ to be a solution of the Lagrangian in flat space-time, not that we have a lift of the solution on a generic background. We are taking a derivative of $f_{T\bar T}$ with respect to $g$, which can appear either inside of $T_{\mu\nu}$, or contracting products of $T_{\mu\nu}$'s. Since we are assuming $f_{T\ov T}$ to be quadratic in $T$ we must have 
\begin{equation}
    f_{T\ov T}= X^{\mu\nu\rho \sigma}T_{\mu \nu}T_{\rho \sigma}+ O(T^3),
\end{equation}
 where $X$ is some tensor which we take to satisfy $X^{\mu \nu \rho \sigma}=X^{\rho \sigma\mu \nu }$ without loss of generality. We have  
\begin{equation}
    \del _g f_{T\ov T}=  2 X^{\mu\nu\rho \sigma}T_{\mu \nu}\,\del_gT_{\rho \sigma} + O(T^2).
\end{equation}
Crucially we see that $\del _gf_{T\ov T}$ is still linear in $T$, which implies 
\bea
\frac{\partial f_{T\overbar{T}}(T_{\rho\tau}(\varphi^*))}{\partial g^{\mu \nu}} 
&=&
0
\eea
and
\bea
  T^*{}^{(\gamma)}_{\mu \nu} 
  &=& 
T^*_{\mu\nu}
=0.
\eea
This shows that the claim is true at first order. Let us now prove it at second order by using the first-order Lagrangian
\bea
\cL^{(\gamma)}(\varphi)
=
\cL(\varphi)
+\gamma f(T_{\mu\nu})
+\cdots
\eea
as a seed.
From now on, let us denote the undeformed Lagrangian $\cL$ by $\mathcal{L}_0$, which is deformed as 
\begin{align}
    \mathcal{L}_0 \longrightarrow \mathcal{L}^1 = \mathcal{L}_0 + \gamma f \left( T_{\mu \nu}^{(0)} \right) \, , 
\end{align}
where $f \left( T_{\mu \nu}^{(0)} \right)$ is a function of the stress tensor for the seed theory. Given the proof of the first order result,
the idea is that a similar conclusion holds to all orders in $\l$. This is because the leading-order argument can be iterated, since now the deformed theory $\mathcal{L}^1$ can be viewed as a new seed theory, and a similar reasoning shows that a further deformation by a function of the first-order deformed stress tensor $T_{\mu \nu}^{(1)}$ will preserve the solution $\varphi^*$ with $T_{\mu\nu}^*=0$. 

We assume that the Lagrangian has a convergent Taylor series expansion in $\l$,
\begin{align}
    \mathcal{L}^{(\gamma)} = \mathcal{L}_0 + \gamma\mathcal{L}_1 + \l^2 \mathcal{L}_2 + \cdots \, .
\end{align}
We use the symbols $\mathcal{L}_i$ with a lower index for the Taylor coefficients in the Lagrangian, in contrast to the variables $\mathcal{L}^{k}$ with an upper index, which we define as the approximation to $\mathcal{L}^{(\gamma)}$ which is accurate up to $\mathcal{O} ( \l^k )$,
\begin{align}
    \mathcal{L}^{k} = \sum_{i=0}^{k} \l^i \mathcal{L}_i \,~~~
     \mathcal{L}^{(\gamma)} :=\lim_{k\to+\infty} \mathcal{L}^{k}\,.
\end{align}
Likewise, we let $T_{\mu \nu}^{k}$ be the energy-momentum tensor constructed from $\mathcal{L}^k$. By virtue of the differential equation \eqref{appendix_proof_flow}, the approximate Lagrangians $\mathcal{L}^k$ satisfy
\begin{align}\label{taylor_to_stress}
    \mathcal{L}^{k+1} = \mathcal{L}^{k} + \frac{\l^{k+1}}{k+1} \Big[ f \left( T_{\mu \nu}^{k} \right) \Big]_{\l^k} \, ,
\end{align}
where the notation $\left[ f \left( T_{\mu \nu}^{k} \right) \right]_{\l^k}$ means to extract the Taylor coefficient proportional to $\l^k$ in the series expansion of $f \left( T_{\mu \nu}^{k} \right) $. Explicitly,
\begin{align}
    \Big[ g ( \gamma) \Big]_{\l^k} = \frac{1}{k!} \frac{d^k g}{d \l^k} \Big\vert_{\gamma= 0} \, , 
\end{align}
for any function $g(\gamma)$.

Let us look at the second order.
We have
\begin{align}
    \mathcal{L}_1 = f \left( T_{\mu \nu}^0 \right)  , \qquad \mathcal{L}_2 = \frac{1}{2} \Big[ f \left( T_{\mu \nu}^1 \right) \Big]_{\l} , 
\end{align}
where one should also compute derivatives of the function $f$ to extract the Taylor expansion. In particular, it is important that the argument $T_{\mu \nu}^{1}$ of the function $f$ in $\mathcal{L}_2$ is itself determined in terms of the same function $f$:
\begin{align}
     f \left( T_{\mu \nu}^1 \right) &= f \left[ T_{\mu \nu}^{0} + \gamma T_{\mu \nu} \left( \mathcal{L}_1 \right) \right] \nonumber \\
     &= f \left[ T_{\mu \nu}^{0} + \gamma T_{\mu \nu} \left( f \left( T_{\mu \nu}^{(0)} \right) \right) \right]  .
\end{align}
This is because the Hilbert stress tensor is a linear function of the Lagrangian, so in general for a sum $\mathcal{L} = \mathcal{L}_A + \mathcal{L}_B$, the total stress tensor is $T_{\mu \nu} ( \mathcal{L} ) = T_{\mu \nu} ( \mathcal{L}_A ) + T_{\mu \nu} ( \mathcal{L}_B )$.
In particular, assuming appropriate convergence, we also find
\bea
T^{(\gamma)}_{\mu\nu}
=T_{\mu\nu}[\cL^{(\gamma)}]
=\lim_{k\to+\infty}T_{\mu\nu}[\cL^{k}]
=\lim_{k\to+\infty}T^k_{\mu\nu}
.
\eea
The argument above tells us that
$$
{\rm EOM}[\cL_0,\varphi^*]=0
,~~~
T^*_{\mu\nu}=0,
$$
$$
{\rm EOM}[\cL_1,\varphi^*]=0
,~~~
{\rm EOM}[\cL^1,\varphi^*]=0
,~~~
T^1{}^*_{\mu\nu}=0
.
$$
From these, we repeat the derivation of the equations of motion, stress energy-tensor, and so on, to obtain
\bea
{\rm EOM}[\cL^2,\varphi]
=
{\rm EOM}[\cL^1,\varphi^*]
+\hf\l^2\Big{[}\mathbb{O}^1_{\rho\tau}\frac{\pa f(T^1_{\mu\nu})}{\pa T_{\rho\tau}}\Big{]}_{\l},
\label{111111}
\eea
where we have denoted
$$
\mathbb{O}^1_{\rho\tau}=\mathbb{O}^1_{\rho\tau}\left[\frac{\d T^1_{\rho\tau}}{\d \varphi}\right]
$$
which is the operator arising from evaluating $\frac{\d T^1_{\rho\tau}}{\d \varphi}$ in the variation of $\varphi$ followed by potential appropriate integration by parts.
Again, if evaluated on the solution, \eqref{111111} is such that
\bea
{\rm EOM}[\cL^2,\varphi^*]
=
{\rm EOM}[\cL^1,\varphi^*]
+\hf\l^2\Big{[}\mathbb{O}^1_{\rho\tau}\frac{\pa f(0)}{\pa T_{\rho\tau}}\Big{]}_{\l}
=0,
\eea
hence, $\varphi^*$ is a solution of the EOM of the Lagrangian $\cL^2(\varphi)$, deformed at second order in $\l$. Since $f(0)=0$, and given the derivation provided above for $\cL^1$, we have
$$
\cL^{2}(\varphi^*)=\cL^1(\varphi^*)=\cL_0(\varphi^*)
~,~~~\cL_{1}(\varphi^*)=\cL_{2}(\varphi^*)=0.
$$
 From here on, let us assume that $f=f_{T\overbar{T}}$.
The second order stress-energy tensor evaluated on the solutions gives
\bea
  T^{2}_{\mu \nu} 
  = 
  - 2 \frac{\partial \cL^{2}(\varphi)}{\partial g^{\mu \nu}} 
  + g_{\mu \nu} \cL^{2}(\varphi),
  \eea
and evaluated on $\varphi^*$, it is 
\bea
  T^2{}^*_{\mu \nu} 
  &=& 
  - 2 \frac{\partial \cL^{2}(\varphi^*)}{\partial g^{\mu \nu}} 
  + g_{\mu \nu} \cL^{2}(\varphi^*)
\\
  &=& 
  T^1{}^*_{\mu \nu} 
      - \l^2\Bigg\{\Big{[} \frac{\partial f_{T\overbar{T}}(T^1_{\rho\tau}(\varphi))}{\partial g^{\mu \nu}} \Big{]}_{\l}\Bigg\}
      \Bigg|_{g^{\mu\nu}=\eta^{\mu\nu}(\delta^{\mu\nu}),\varphi=\varphi^*}
      =0.
\eea
Here we have used exactly the same calculation done at first order, where it does not matter that we are taking the $[\,]_\l$ truncation. In fact, the argument given above, based on the fact that $f_{T\overbar{T}}$ is 
at least
quadratic in $T_{\mu\nu}$, leads to 
\bea
\frac{\partial f_{T\overbar{T}}(T^1_{\rho\tau}(\varphi))}{\partial g^{\mu \nu}} \Bigg|_{g^{\mu\nu}=\eta^{\mu\nu}(\delta^{\mu\nu}),\varphi=\varphi^*}
=0
~~~\Longrightarrow~~~
\Big{[}\frac{\partial f_{T\overbar{T}}(T^1_{\rho\tau}(\varphi))}{\partial g^{\mu \nu}} \Big{]}_\l\Bigg|_{g^{\mu\nu}=\eta^{\mu\nu}(\delta^{\mu\nu}),\varphi=\varphi^*}
=0.
\eea
By induction, assuming the argument to be true up to order $k$,
$$
{\rm EOM}[\cL^k,\varphi^*]=0
,~~~
T^k{}^*_{\mu\nu}=0.
$$
Upon repeating the same argument, we find 
\bea
{\rm EOM}[\cL^{k+1},\varphi]
=
{\rm EOM}[\cL^k,\varphi]
+ \frac{\l^{k+1}}{k+1} \Big{[}\mathbb{O}^k_{\rho\tau}\frac{\pa f(T^k_{\mu\nu})}{\pa T_{\rho\tau}}\Big{]}_{\l^k}
,
\eea
where we have denoted
$$
\mathbb{O}^k_{\rho\tau}=\mathbb{O}^1_{\rho\tau}\left[\frac{\d T^k_{\rho\tau}}{\d \varphi}\right],
$$
which is the linear operator arising from evaluating $\frac{\d T^k_{\rho\tau}}{\d \varphi}$ in the variation of $\varphi$ followed by potential appropriate integration by parts.
Again, if evaluated on the solution, we have
\bea
{\rm EOM}[\cL^{k+1},\varphi^*]
=
{\rm EOM}[\cL^k,\varphi^*]
+ \frac{\l^{k+1}}{k+1} \Big{[}\mathbb{O}^k_{\rho\tau}\frac{\pa f(0)}{\pa T_{\rho\tau}}\Big{]}_{\l^k}
=0
.
\eea
Hence, $\varphi^*$ is a solution of the equations of motion associated to $\cL^{k+1}(\varphi)$, deformed at order $k+1$ in $\l$. Given that $f(0)=0$, and the derivation given above for $\cL^k$, we also see that 
\bea
\cL^{k+1}(\varphi^*)=\cL_0(\varphi^*)
.
\eea
The $(k+1)$-th order stress-energy tensor 
\bea
  T^{k+1}_{\mu \nu} 
  = 
  - 2 \frac{\partial \cL^{k+1}(\varphi)}{\partial g^{\mu \nu}} 
  + g_{\mu \nu} \cL^{k+1}(\varphi)
  \eea
evaluated on the solution $\varphi^*$, is 
\bea
  T^{k+1}{}^*_{\mu \nu} 
  &=& 
  - 2 \frac{\partial \cL^{k+1}(\varphi^*)}{\partial g^{\mu \nu}} 
  + g_{\mu \nu} \cL^{k+1}(\varphi^*)
\\
  &=& 
  T^k{}^*_{\mu \nu} 
      - 2 + \frac{\l^{k+1}}{k+1} \Bigg\{\Big{[}\frac{\partial f_{T\overbar{T}}(T^1_{\rho\tau}(\varphi))}{\partial g^{\mu \nu}} \Big{]}_{\l^k}\Bigg\}\Bigg|_{g^{\mu\nu}=\eta^{\mu\nu}(\delta^{\mu\nu}),\varphi=\varphi^*}
      =0
.
\eea
Here we used again the same argument as above to obtain
\bea
\frac{\partial f_{T\overbar{T}}(T^k_{\rho\tau}(\varphi))}{\partial g^{\mu \nu}} \Bigg|_{g^{\mu\nu}=\eta^{\mu\nu}(\delta^{\mu\nu}),\varphi=\varphi^*}
=0
~~~\Longrightarrow~~~
\Big{[}\frac{\partial f_{T\overbar{T}}(T^k_{\rho\tau}(\varphi))}{\partial g^{\mu \nu}} \Big{]}_{\l^k}\Bigg|_{g^{\mu\nu}=\eta^{\mu\nu}(\delta^{\mu\nu}),\varphi=\varphi^*}
=0
.
\eea
Hence, the induction argument gives
$$
\forall k~,~~~
{\rm EOM}[\cL^k,\varphi^*]=0
~,~~~
T^k{}^*_{\mu\nu}=0,
$$
and taking the $k\to+\infty$ limit while assuming convergence implies
\bea
\cL^{(\gamma)}(\varphi^*)=\cL_0(\varphi^*)
,~~~
{\rm EOM}[\cL^{(\gamma)},\varphi^*]=0
,~~~
T^{(\gamma)}{}^*_{\mu\nu}=0
.
\eea
This indicates that a solution of the equations of motion, $\varphi^*$ with $T^*_{\mu\nu}=0$, is also a solution to the full equations of motion derived from the Lagrangian $\cL^{(\gamma)}$, and that $T^{(\gamma)}_{\mu\nu}(\varphi^*)$ remains zero along the whole flow together with the relation $\cL^{(\gamma)}(\varphi^*)=\cL_0(\varphi^*)$ for the on-shell Lagrangian. This argument applies, in particular, to the instanton solutions of the $\Cp {N-1}$ model described in section \ref{sec:CPN_review}.

In order to extend beyond the case of $\TTbar$-like flows, and further study the effect of higher spin deformations, we will now look at their auxiliary field formulation. This is particulary powerful when we look at soliton surfaces of the deformed $\Cp 1$ model. We will also see that the non-deformation of instantons by $T\ov T$-like flows is manifest in the auxiliary field formulation and we will propose a generalisation of the ``vanishing $T$" condition appropriate for more general higher spin deformations.

\section{Auxiliary field formalism}\label{s:cpnaux}
In this section we focus on using auxiliary fields to describe higher spin deformations of the $\Cp {N-1}$ model. We will begin with a brief review of \cite{Bielli:2024ach,Bielli:2024oif}, and then specialise to $\Cp{N-1}$.
 
 \subsection{Review of (higher-spin) auxiliary field deformations}\label{ss:auxrev}

We have mentioned, in Section \ref{s:ttbcpn}, that the $\TT$ deformation preserves classical integrability. Deformations by specific higher-spin analogues of $\TT$, the so-called Smirnov-Zamolodchikov operators \cite{Smirnov:2016lqw}, also preserve integrability. Surprisingly, it seems that more is true, at least for seed theories which belong to certain classes of $2d$ integrable sigma models: deformations by \emph{arbitrary functions} of the stress tensor and high-spin conserved quantities preserve classical integrability. An elegant and powerful way to package the entire family of all such deformed models is the auxiliary field formalism, introduced in \cite{Ferko:2024ali} and building on \cite{Borsato:2022tmu,Ferko:2023wyi}.

The main idea of the auxiliary field machinery is to implement integrable deformations of a $2d$ integrable field theory by introducing a vector field $v$, which appears algebraically in the action and couples to specific model-dependent currents. Any coupling constants, such as the $\TT$ parameter $\gamma$, associated to the integrable deformation \textit{only} appear in the self interaction terms of $v$, through an ``interaction function'' $E$. This formalism is surprisingly general. It can be applied systematically to deform a large class of theories such as the principal chiral model (with \cite{Fukushima:2024nxm,Bielli:2024ach} or without \cite{Ferko:2024ali} Wess-Zumino term), the Yang-Baxter and bi-Yang-Baxter deformed PCM \cite{Bielli:2024fnp}, and various coset sigma models \cite{Cesaro:2024ipq,Bielli:2024oif}. These auxiliary field couplings have also been engineered in the framework of $4d$ Chern-Simons theory \cite{Fukushima:2024nxm}, and find applications to dimensionally reduced gravity \cite{Cesaro:2025msv}.

The only class of deformations that will be relevant for us are auxiliary field symmetric space sigma models; hence, we will omit the other cases from the review.  In this case, the fundamental object in the seed theory is a coset-valued field 
\begin{equation}
    g:\Sigma \longrightarrow G/H, 
\end{equation}
where $G$ is a Lie group and $H$ is a subgroup of $G$, $H < G$.

Note that we will tacitly assume $g$ to be composed with a local section of the fibration 
\begin{equation}
    G\overset \pi \longrightarrow G/H,
\end{equation}
in other words, $g$ is a coset representative of $G/H$ pulled back to the worldsheet. Accordingly, one can decompose the Lie algebra $\fg$ of $G$ as 
\begin{equation}\label{lie_algebra_decomp}
    \fg = \fg ^{(0)}\oplus \fg ^{(2)},
\end{equation}
where $\fg ^{(0)}$ is the algebra of $H$ and $\fg ^{(2)}$ its orthogonal complement with respect to the $G$-invariant  bilinear form $\ttr{-,-}$, namely 
\begin{equation}
    \ttr {\fg^{(2)}\fg ^{(0)}}=0. 
\end{equation}
We also assume the following properties:
\begin{equation}\label{eq:cosetrels}
    \begin{split}
        [\fg ^{(2)},\fg ^{(2)}] &\subset \fg ^{(0)}, \quad [\fg ^{(0)},\fg ^{(2)}]\subset \fg ^{(2)}. 
    \end{split}
\end{equation}
One can define the $\fg$-valued worldsheet current 
\begin{equation}
    j = g^{-1}dg,
\end{equation}
which satisfies the Maurer-Cartan identity 
\begin{equation}\label{eq:cosetMC}
    \del _\mu j_\nu - \del _\nu j_\mu+[j_\mu,j_\nu]=0.
\end{equation}
In turn, it is useful to project $j$ onto its $(0)$ and $(2)$ components and write 
\begin{equation}
    j= j^{(0)}+j^{(2)}.
\end{equation}
Decomposing \eqref{eq:cosetMC} into its $\fg ^{(0)}$ and $\fg ^{(2)}$ blocks leads to two independent conditions 
\begin{equation}\label{mc_projection_first_time}
    F^{(0)}_{\mu\nu}+[j^{(2)}_\mu,j^{(2)}_\nu]=0,\quad {\cal D}_\mu j^{(2)}_\nu-{\cal D}_\nu j^{(2)}_\mu=0,
\end{equation}
where 
\begin{equation}
    F^{(0)}_{\mu\nu}=     \del _\mu j^{(0)}_\nu - \del _\nu j^{(0)}_\mu+[j^{(0)}_\mu,j^{(0)}_\nu],
\end{equation}
and we have introduced the $H$-covariant derivative 
\begin{equation}
    {\cal D}_\mu = \del _\mu + [j_\mu ^{(0)}, - ].
\end{equation}
The action of the undeformed coset model is then simply 
\begin{equation}\label{undeformed_coset_action}
    S[g]=\int_\Sigma\frac 12 \ttr {j_\mu^{(2)}j_\nu^{(2)}}\eta ^{\mu\nu}\, d^2x,
\end{equation}
where $\eta$ is the $2d$ Minkowski metric. The equations of motion can be conveniently expressed as a conservation equation in terms of $j$ and ${\cal D}$
\begin{equation}\label{undeformed_coset_eom}
    {\cal D}^\mu j_\mu ^{(2)}=0. 
\end{equation}
From now on, we will use lightcone coordinates in Lorentzian signature, such that 
\begin{equation}
    x ^\pm=\frac 12 (x^1 \pm x^2 ), 
\end{equation}
note that these conventions are different from the complex Euclidean version \eqref{complexifiedcoordinate}.
The Minkowski metric becomes 
\begin{equation}
    \eta_{\mu\nu} =\begin{pmatrix}
       0 &-2\\
        -2&0
    \end{pmatrix}. 
\end{equation}
In this coordinate system the action reads  
\begin{equation}\label{eq:lightacn}
    S[g]=\int_\Sigma-\frac 12 \ttr {j_+^{(2)}j_-^{(2)}}\, d^2x.
\end{equation}

In order to describe higher spin deformations we will couple \eqref{eq:lightacn} to a vector auxiliary field $v^{(2)}\in \Omega ^1(\Sigma )\otimes \fg^{(2)}$. The deformed action is then given by 
\begin{equation}\label{eq:auxacn}
    \begin{split}
        S[g,v] = \int_\Sigma \Bigg(&-\frac 12 \ttr {j_+^{(2)}j_-^{(2)}}+\ttr {v_+^{(2)}j_-^{(2)}}+\ttr {j_+^{(2)}v_-^{(2)}}\\[0.4em]
        &+ \ttr {v_+^{(2)}v_-^{(2)}}+E\bigg(\ttr {(v_+^{(2)})^k},\ttr {(v_-^{(2)})^l}\bigg)\Bigg) d^2x,
    \end{split}
\end{equation}
where $E$ is the interaction function. We will sometimes refer to (\ref{eq:auxacn}) as the AF-SSSM, for ``auxiliary field symmetric space sigma model''. Note that it is crucial for $E$ to be Lorentz invariant and to only be a function of ``chiral traces''. Namely, a sufficient condition to preserve integrability is that only traces of products of $v^{(2)}_{\pm}$ with homogeneous signs appear in $E$, not objects like $\ttr {(v_+^{(2)})^p\,(v_-^{(2)})^q}$ where both $p$ and $q$ are non-vanishing \cite{Bielli:2024oif,Bielli:2025uiv}.

The equations of motion can be derived by varying \eqref{eq:auxacn} with respect to $g$ and $v$, giving
\begin{equation}\label{eq:coseteom}
    \begin{split}
        v^{(2)}_\pm +j^{(2)}_\pm +\Delta_{\pm}&=0,\\[0.4em]
        {\cal D}_+\mathfrak J^{(2)} _- + {\cal D}_-\mathfrak J^{(2)}_+&=2\big( [v_-^{(2)},j^{(2)}_+]+[v_+^{(2)},j^{(2)}_-] \big),
    \end{split}
\end{equation}
where
\begin{align}\label{frakJ_defn}
    \mathfrak J_\pm = -(j_\pm + 2 v _\pm )
\end{align}
and $\Delta _\pm = \delta _{v^{(2)}_\mp}E$ comes from the variation of the interaction function along $v$. Note that since $E$ only depends on chiral traces, we find that $\Delta_\pm $ commutes with $v^{(2)}_\mp$. In turn, this implies that if we are on shell for the auxiliary fields, namely we treat their equations of motion as a constraint, then the current $\mathfrak J$ becomes conserved
\begin{equation}\label{eq:onshelleom}
    {\cal D}_+\mathfrak J^{(2)} _- + {\cal D}_-\mathfrak J^{(2)}_+=0.
\end{equation}
Crucially, when the auxiliary field equations of motion are satisfied, one has the identities
\begin{align}\label{commutator_identities}
    [ \mathfrak{J}_+^{(2)} , \mathfrak{J}_-^{(2)} ] = [ j_+^{(2)} , j_-^{(2)} ] \, , \qquad [ \mathfrak{J}_+^{(2)} , j_-^{(2)} ] = [ j_+^{(2)} , \mathfrak{J}_-^{(2)} ] \, ,
\end{align}
which can be used \cite{Bielli:2024oif} to show\footnote{In comparing to \cite{Bielli:2024oif}, note that our conventions for the Lax here differ by an overall sign as discussed around (\ref{bad_sign_lax}).} that \eqref{eq:onshelleom} is equivalent to flatness of the Lax connection
\begin{equation}\label{eq:defLax}
     L_\pm=- j^{(0)}_{\pm}- \frac {(z^2+1)j_\pm ^{(2)} \mp 2z\mathfrak J^{(2)} _\pm}{z^2-1} \, .
\end{equation}
Therefore any deformation of this type is weakly integrable. It is important to stress that classical integrability is a consequence of the interaction function \textit{only} depending on chiral traces. If this were not the case, $\Delta_\pm$ would fail to commute with $v_\mp$ and $\mathfrak J^{(2)}$ would not necessarily be conserved. It could be that integrable deformations where $E$ contains non-chiral traces exist, however, the Lax associated to them is not \eqref{eq:defLax}. 

This brief review was not intended to be complete, as we have limited ourselves to stating the aspects of this formalism which will be relevant for us. For more details on the standard SSSM, the interested reader should consult Section 4 of \cite{Zarembo:2017muf}, Section 2.2.3 of \cite{Seibold:2020ouf}, or Section 1.3 of \cite{yoshida2021yang}; for further information on auxiliary field deformations, see the papers \cite{Ferko:2024ali,Bielli:2024ach,Bielli:2024oif}. We will now discuss how to apply this technology to the $\Cp {N-1}$ model.
\subsection{Soliton surfaces of $\Cp1$ using auxiliary fields}
After reviewing the soliton surfaces of the $\Cp 1$ model in section \ref{STCPN}, it is interesting to understand how these are affected by integrable higher spin deformations. The simplest way to probe the geometry is again to look at the Gauss and mean curvatures, as we did for the undeformed theory. 

 As anticipated, the auxiliary field formalism is especially well suited to study these types of integrable deformations. Crucially, the $v$ field couples to specific Lie algebra-valued currents in the undeformed theory. Hence, in order to exploit this technology, we must first re-write the $\Cp {N-1}$ model as a coset theory on 
\begin{equation}
    \Cp{N-1}\cong \frac {U (N)}{U (N-1)\times U (1)}\cong \frac{SU (N)}{SU(N-1)\times U (1)}.  
\end{equation}
\subsubsection{The $\Cp{N-1}$ model as a coset}\label{sec:CPN_as_coset}
The main ingredient we need is a coset representative written in terms of the standard set of fields we have been using, namely $\phi_I$, with $I=1,\ldots ,N$. Equivalently, a local section 
\begin{equation}
    A: \Cp {N-1}\rightarrow U (N).
\end{equation}
For simplicity, we will write $A$ in the fundamental representation. It is convenient to split it into blocks according to the decomposition 
\begin{equation}
    \mathbf N\rightarrow (\mathbf{N-1})\oplus \mathbf 1,
\end{equation}
which arises from breaking $U (N)\rightarrow U(N-1)\times U( 1)$. Let us decompose the $I$ index as $I=(1,i)$, where $1$ labels the $U(N-1)$ singlet direction, and $i = 2,\ldots ,N$, the fundamental. With these conventions, we can choose 
\begin{equation}
    \begin{split}
        A^1{}_1 &=\phi_1,\\
        A^i{}_1 &= \phi _i,\\
        A^1{}_i &= \frac {\phi _1}{|\phi_1|}\, \ov \phi _i,\\
        A^i{}_j&=\frac 1{1+|\phi_1|}\, \phi _i\, \ov \phi _j - \delta _{ij}, 
    \end{split}
\end{equation}
where $\ov \phi$ is the complex conjugate of $\phi$. \\

We will follow the notational conventions of \cite{Bielli:2024oif}, in this language, the $\uu N$-valued current is given by 
\begin{equation}
    j_{IJ}=\ov A^K{}_I \,dA^K{}_J.
\end{equation}
In order to recover \eqref{CPNLagrangian1}, we only need the components of $j$ in the coset algebra 
\begin{equation}
    \mathfrak g ^{(2)}= \uu N\ominus (\uu {N-1}\oplus \uu 1), 
\end{equation}
namely 
\begin{equation}\label{eq:j2dec}
    j^{(2)}= \big( j_{1i},\, j_{i1},\, 0\big). 
\end{equation}
It is then straightforward to check that
\begin{equation}\label{eq:CPcoset}
    \begin{split}
        \cL = \frac12 \ttr {j^{(2)}_\mu\, j^{(2)}_\nu}\eta ^{\mu\nu}= - d\phi ^I\, d \ov \phi ^I -  \left ( \phi ^I \, d \ov \phi ^I\right )^2,
    \end{split}
\end{equation}
which is consistent with \eqref{CPNLagrangian1} up to an overall rescaling of the Cartan-Killing form. Note also that the expression (\ref{eq:CPcoset}) for the Lagrangian, $\mathcal{L} \sim \tr ( j_+^{(2)} j_-^{(2)} )$, takes the same form $\mathcal{L} \sim \tr ( K_+ K_- )$ as in the projector formalism described around equation (\ref{LagrangianinK}); the objects $j_\pm^{(2)}$ and $K_\pm$ have some similar properties, but are different ways of parameterizing the degrees of freedom in the theory.

Higher spin integrable deformations of \eqref{eq:CPcoset} can now be constructed using the AFSM formalism of section \ref{ss:auxrev}.
\subsubsection{Deformed soliton surfaces of $\Cp 1$}\label{s:defsolaux}
We will now look at how the soliton surface curvatures associated to the $\Cp 1 \cong S^2$ model are affected by higher spin deformations. As anticipated, we will see that the auxiliary field formalism greatly simplifies these calculations. Before moving on to the soliton surfaces, let us take a step back and analyse what kind of deformations are allowed in the $\Cp {N-1}$ model. According to the decomposition \eqref{eq:j2dec}, the field $v_\pm^{(2)}$ must take the form  
\begin{equation}
    {v^{(2)}_\pm}^I{}_J= \begin{pmatrix}
        0_{1\times 1}&w_{\pm,i}\\
        y_\pm ^i&0_{(N-1)\times (N-1)}
    \end{pmatrix},
\end{equation}
where $w$ and $y$ are $N-1$ covectors and vectors respectively. Note that we are working in the complexified Lie algebra, so we do not necessarily assume $\ov w = - y$. Squaring $v^{(2)}$ gives 
\begin{equation}
    (v_\pm^{(2)})^2=\begin{pmatrix}
         w_\pm\cdot y_\pm&0\\
        0&y_\pm^i \, w_{\pm,j}
    \end{pmatrix} := X_{\pm\pm},
\end{equation}
and similarly 
\begin{equation}
    \begin{split}
        v_\pm^{(2)}\, X_{\pm\pm} &= ( w_\pm \cdot y_\pm ) \, v_\pm^{(2)},\\[0.4em]
        (X_{\pm\pm})^2 &=( w_\pm \cdot y_\pm )\, X_{\pm\pm}.
    \end{split}
\end{equation}
It is then clear that chiral traces simplify to 
\begin{equation}
    \begin{split}
        \ttr {(v_\pm^{(2)})^{2k-1}}=0, \quad \ttr {(v_\pm^{(2)})^{2k}}=\frac 1{2^{k-1}}\ttr {(v_\pm^{(2)})^{2}}^k.
    \end{split}
    \label{tracescpn}
\end{equation}
The key takeaway from this simple calculation is that it is sufficient to consider higher spin deformations of the form 
\begin{equation}\label{eq:cpE}
    E=E(\nu_2), \quad \nu _2=\ttr {(v^{(2)}_+)^2}\ttr {(v^{(2)}_-)^2}.
\end{equation}
With this choice the constraint arising from the auxiliary fields simplifies to 
\begin{equation}\label{eq:veomcp1}
    j^{(2)}_\pm+v^{(2)}_\pm + 2 E' \ttr {(v^{(2)}_\pm)^2}v^{(2)}_\mp=0.
\end{equation}
We will compute the curvatures on soliton surfaces only in terms of $v^{(2)}$. We can eliminate $j^{(2)}$ from the Lax connection \eqref{eq:defLax} using \eqref{eq:veomcp1}
\begin{equation}\label{eq:defLaxv}
     L _\pm =-\jz \pm + \frac {z\mp 1 }{z\pm 1}\, \vt \pm + \frac {z\pm 1}{z\mp 1 }\, 2 E' \, \ttr {(\vt \pm )^2 }\, \vt \mp.
\end{equation}
Before moving on to computing quantities on the surface, let us make an observation on the structure of $L$. Firstly, it will be useful to decompose it into $(0)$ and $(2)$ blocks
\begin{equation}
     L= L^{(0)}+L^{(2)}.
\end{equation}
The ``undeformed Lax'' is
\begin{equation}\label{eq:undefLx}
     L_\pm\big|_{E=0}=-\jz \pm + \frac {z\mp 1 }{z\pm 1}\, \vt\pm .
\end{equation}
Note that calling \eqref{eq:undefLx} undeformed is incorrect, because it is evaluated on solutions of the deformed theory. Namely, it would not be flat if $v^{(2)}$ and $\jz{}$ corresponded to solutions of the theory with some non-trivial $E$. However, it formally corresponds to \eqref{eq:defLax} upon setting $E=0$. It is convenient to introduce this object, because it satisfies 
\begin{equation}
      L^{(2)} _\mu = \Lambda _\mu {}^\nu \,  L^{(2)}_\nu \big|_{E=0},
\end{equation}
where 
\begin{equation}\label{other_lax_lambda}
    \Lambda=\begin{pmatrix}
        1 & 2E'\ttr {(\vt +)^2}\\
        2E'\ttr {(\vt -)^2}&1
    \end{pmatrix},
\end{equation}
where we are still assuming that the deformation parameter $\gamma $ is small enough that $\Lambda $ is not orientation reversing.
In short, this is saying that the Lax connection where we formally set $E=0$, is related to \eqref{eq:defLax} by a local linear transformation. 

We can now use this fact to compute the curvatures. Substituting \eqref{eq:defLaxv} into \eqref{pullback_metric} immediately yields the deformed metric $g_{\mu \nu}$ (not to be confused with the $G$-valued field $g$; whether we are talking about one or the other will always be obvious from context) on a soliton surface 
\begin{equation}\label{deformed_soliton_metric}
    \begin{split}
        g_{++} &=-4 \,\ttr {\vt + \vt + } \left ( \frac 1 {(z+1)^4}- \frac {4 E' }{(z^2-1)^2}\, \ttr {\vt + \vt - } + \frac 4 {(z-1)^4}\, (E')^2 \, \nu_2 \right ),\\[0.6em]
          g_{--}&=-4 \,\ttr {\vt - \vt - } \left ( \frac 1 {(z-1)^4}- \frac {4 E' }{(z^2-1)^2}\, \ttr {\vt + \vt - } + \frac 4 {(z+1)^4}\, (E')^2 \, \nu_2 \right ),\\[0.6em]
          g_{+-}&=\frac 4{(z^2-1)^2}\ttr {\vt + \vt -} \left (1+4\, (E')^2\,\nu_2\right )- 8\, E'\, \nu_2 \left ( \frac 1 {(z+1)^4}+ \frac  1 {(z-1)^4}\right ).
    \end{split}
\end{equation}
Notice that because \eqref{pullback_metric} is tensorial in the Lax, and $\Lambda$ does not depend on the spectral parameter $z$, the deformed metric formally corresponds to 
\begin{equation}\label{metric_Lambda_relation}
    g= \Lambda \big(g\big |_{E=0}\big) \Lambda ^T.
\end{equation}
The second fundamental form is in principle given by 
\begin{equation}\label{eq:secf}
    h_{\mu \nu} = \big |[\del _zL _+, \del _z L_- ]\big |^{-1}\, \langle  \del _z \, \del _\mu L _\nu + \left [ \del _z \, L _\nu , L _\mu \right ], \, [\del _zL _+, \del _z L_- ]\rangle . 
\end{equation}
However, using the fact that we are in a coset model simplifies this expression even if we stay formally off-shell. More precisely, we can use \eqref{eq:cosetrels} to eliminate several contributions to $h_{\mu\nu}$, and end up with the only non-vanishing term being 
\begin{equation}\label{eq:dsimp}
    h_{\mu\nu} = \big |[\del _zL _+, \del _z L_- ]\big |^{-1}\, \langle  \left [ \del _z \, L _\nu , L^{(2)} _\mu \right ], \, [\del _zL _+, \del _z L_- ]\rangle . 
\end{equation}
It is now manifest that $h_{\mu\nu}$ is itself tensorial in $L$, hence we again have 
\begin{equation}
    h= \Lambda \big(h\big |_{E=0}\big) \Lambda ^T.
\end{equation}
Recall that the mean and Gauss curvatures $H$ and $K$ can be written in terms of the metric and second fundamental form as 
\begin{equation}\label{eq:kappaom}
    \begin{split}
            H &= \frac 12 \, \ttr { h\, g^{-1} },\\
            K&=   \det \left (h\, g^{-1}\right ).
    \end{split} 
\end{equation}
We can formally compute these in the undeformed case $E=0$, and because all $\Lambda$ contributions will be cancelled in \eqref{eq:kappaom}, the form of the result in terms of $\vt {}$ will stay unchanged. We find 
\begin{equation}\label{eq:curvdef}
    \begin{split}
        H &= (z^2-1)\, \frac {\langle \vt +, \vt -\rangle}{\big|[\vt +, \vt -]\big |},\\
        K &= -\frac 12 \, (z^2 -1 )^2,
    \end{split}
\end{equation}
which is consistent with the undeformed expression in Eq. \eqref{Curvoriginal}. Note that $H$ in Eq. \eqref{eq:curvdef} is not undeformed because it is evaluated on a  \textit{deformed} solution. However, the constant Gauss curvature $K$ remains undeformed. 

\subsection{Non-deformation of instanton solutions for analytic auxiliary field deformations}

We have seen how powerful the auxiliary field formalism is when looking at the soliton surfaces of higher-spin deformed models. It is then natural to ask whether the conservation of instanton solutions of section \ref{s:instpres1} also fits naturally within this description. We will see now that the answer is yes, and, in particular, the argument via auxiliary fields turns out to be almost immediate and barely requires any calculations. In addition, we will propose a generalisation of the ``vanishing $T$'' condition for more general higher-spin deformations.

We start by expressing $\ttr {(\jt {})^2}$ in terms of the fields $\phi$ of the $\Cp {N-1}$ model, 
\begin{equation}
    \begin{split}
        \ttr {j^{(2)}_+\,j^{(2)}_+ }&= D_{+}\,\phi ^I\,\ov {(D_{-}\, \phi ^I)},\\[0.4em]
        \ttr {j^{(2)}_{-}\,j^{(2)}_{-} }&=D_{-}\,\phi ^I\,\ov {(D_{ +}\, \phi ^I)}, 
    \end{split}
\end{equation}
where $D$ is defined in \eqref{defcovariantderivative}, and we are now in Euclidean signature, so $\pm$ indices label complex coordinates. The instanton condition \eqref{topologicalchiralequation} then implies 
\begin{equation}\label{eq:instantonjj}
    \ttr {j^{(2)}_+\,j^{(2)}_+ }=\ttr {j^{(2)}_{-}\,j^{(2)}_{-} }=0,
\end{equation}
on instanton solutions.
It is now straightforward to argue that these solutions are undeformed by analytic higher-spin deformations. 
\begin{enumerate}
    \item It is clear that setting 
    \begin{equation}\label{eq:vjrel}
        \jt{}=\jt{\rm instanton}, \quad \vt {}=-\jt {} ,
    \end{equation}
    solves the $g$ equation of motion in \eqref{eq:coseteom}, namely
    \begin{equation}
         {\cal D}_+\mathfrak J^{(2)} _- + {\cal D}_-\mathfrak J^{(2)}_+=2\big( [v_-^{(2)},j^{(2)}_+]+[v_+^{(2)},j^{(2)}_-] \big).
    \end{equation}
    We see that the right hand side simply vanishes by anti-symmetry of the Lie bracket, and $\mathfrak J$ reduces to $\jt{}_{\rm instanton}$. Hence, the full expression reduces to the equation of motion of the undeformed theory, which is satisfied by assumption.\label{bp:1}
    \item Recall the constraints imposed by the $\vt {}$ equations of motion in \eqref{eq:coseteom} 
    \begin{equation}
        v^{(2)}_\pm +j^{(2)}_\pm =-\Delta_{\pm}.
    \end{equation}
    The right hand side vanishes by \eqref{eq:instantonjj} and the fact that $\Delta$ becomes an analytic function of traces of $\jt{}$, while the left vanishes automatically by \eqref{eq:vjrel}.\label{bp:2}
\end{enumerate}

In conclusion, we see that any instanton solution will automatically lift to the deformed theory, provided the deformation is analytic in traces. In particular, this is the case for all those theories that arise as a Smirnov-Zamolodchikov flow. 

By Eq. \eqref{eq:cpE}, we are essentially in the same situation as in section \ref{s:instpres1}. Perhaps more interestingly, we see that points \ref{bp:1} and \ref{bp:2} above revolve around a single relation:
\begin{equation}\label{eq:Dis0}
    \Delta \big|_{v=j_{\rm instanton}}=0.
\end{equation}
So we see that any solution of the theory satisfying \eqref{eq:Dis0} is by construction preserved by the deformation. Crucially, this argument will hold for any auxiliary field sigma model, not exclusively for the symmetric space case. We can then write the following general statement:\\

\textit{Consider an auxiliary field sigma model with interaction function $E$ which depends on a set of Lorentz invariant variables $\{x^i\}_{i=1\ldots n}$. Any solution $\widetilde j$ of the seed theory such that 
\begin{equation}\label{eq:criteria1}
    \nabla E\Big|_{v=-\widehat j}=0,
\end{equation}
where $\nabla E$ is the gradient of $E$, lifts to a solution of the AFSM.} 
\\

It would be particularly interesting to attempt a classification of the moduli space of solutions that satisfy \eqref{eq:criteria1}, at least for some class of AFSMs.
  
\subsubsection{Instanton solutions for Root-$T\overbar{T}$ case}
In the previous discussion, as well as in section \ref{s:instpres1}, the key property that solutions had to satisfy in order to be preserved by higher spin deformations was 
\begin{equation}
    \Delta_{\text{on-shell}}=0.
\end{equation}
To ensure this, we always assumed deformations to be ``analytic'' in an appropriate sense. However, it is often interesting to look at deformations that are not analytic in any meaningful way. One example is the root-$T\ov T$ deformation \cite{Ferko:2022cix}, which is both non-analytic 
and (classically) conformal. As such, it is an open question to determine whether it is exactly marginal or marginally irrelevant. 
The deformation is associated with the flow equation
\begin{equation}\label{rtt_defn}
    \del _\gamma \cL^{(\gamma)}
    =
    \sqrt{ \frac{1}{2}T^{(\gamma)}{}^{\mu \nu} T^{(\gamma)}_{\mu \nu} 
    - \frac{1}{4} \left( {T}^{(\gamma)}{}{^\mu{}_\mu} \right)^2 } 
    =\sqrt{-\det{\left[t^{(\gamma)}{}^{\mu\nu}\right]}}
    \,,
\end{equation}
with $t^{\mu\nu}=T^{\mu\nu}-\frac{1}{2}\eta^{\mu\nu}T^{\rho}{}_{\rho}$ the traceless part of the energy-momentum tensor.
In the case of (classical) conformal field theories, the deforming operator becomes the square root of the $T\ov T$ operator, hence justifying the $\sqrt{T\ov T}$ nomenclature.

In auxiliary field language, $\sqrt {T\ov T}$ is described by setting 
\begin{equation}\label{rtt_interaction}
    E= \tanh \left(\frac \gamma 2\right)\,\sqrt {\nu _2}. 
\end{equation}
The integrated Lagrangian was already known before the advent of auxiliary fields methods and can be written in terms of $\jt {}$ \cite{Borsato:2022tmu} (see Appendix \ref{ap:rootlag} for the derivation starting from the AFSM and integrating out $\vt {}$):
\begin{equation}\label{eq:Lj}
\cL = -\frac 12 \left( \cosh\left(\frac \gamma 2\right)\, \tr (\jt+\jt-)+ \sinh \left(\frac \gamma 2 \right)\,
     \sqrt {\tr ((\jt+)^2)\tr ((\jt-)^2)}\right). 
\end{equation}
 Since we determined that instantons are preserved along analytic SZ flows, it is natural to ask whether the same is true for non-analytic examples such as $\sqrt{T\ov T}$.  However, in this case there is a simple obstruction to the argument we presented in the previous section, namely 
\begin{equation}\label{eq:deltroot}
    \Delta_\pm= \tanh \left(\frac\gamma 2\right)\frac{\ttr{(\vt\pm)^2}}{2\sqrt {\nu_2}}\, \vt \mp, 
\end{equation}
hence, any field configuration with $\ttr{(\vt\pm)^2}=0$ has a divergent $\Delta$ and appears not to be a solution of the equations of motion. \\

We will now give a brief argument to show that despite this pathology, instanton solutions still minimize the action. 
Note that we must go to Euclidean signature in order to consider instantons. Accordingly, we should flip the sign in front of the kinetic term $\tr(\jt +\jt -)$ in \eqref{eq:Lj}. The extremisation of the action follows in a few simple steps:
\begin{itemize}
    \item It seems naively impossible to set $\vt {}=-\jt{}$ and $\jt {}= \jt{\rm instanton}$ because equation \eqref{eq:deltroot} blows up. 
    \item We have already shown in \eqref{inequality4} that $\jt{\rm instanton} $ minimise the kinetic term 
    \begin{equation}
        \tr(\jt+\jt-)
    \end{equation}
    in
    \eqref{eq:Lj}. 
    \item We have also seen that the second term in \eqref{eq:Lj}, vanishes on shell for the instantons, beecause it is proportional to $\sqrt {\tr ((\jt+)^2)\tr ((\jt-)^2)}$.
    \item Assuming the positive branch for the square root, it becomes obvious that $\jt {\rm instanton}$ also minimises the whole action. 
\end{itemize}
It would be interesting to understand these deformations when the negative branch is chosen instead, and hence it is not clear that the action is being extremised.

\section{Geometric interpretation}\label{sec:geometric}
The preceding sections of this work have investigated integrable deformations of the $\mathbb{CP}^{N-1}$ model -- including the $\TT$ and root-$\TT$ flows, and more general auxiliary field deformations -- focusing on aspects including soliton surfaces and instanton solutions. Because these solvable deformations appear to preserve certain desirable features of the seed model in a quite universal way, it is natural to wonder whether there is a unifying perspective on this class of models which make some of their properties manifest. We have already reviewed, around equation (\ref{TT_field_dep_diff}), that the $\TT$ flow admits a beautiful interpretation in terms of a field-dependent metric involving the energy-momentum tensor \cite{Conti_2019}; subsequent work has extended this analysis in various directions, including stress tensor flows in higher dimensions \cite{Conti:2022egv,Morone:2024ffm,Ran:2024vgl,Ran:2025xas}, $T \overbar{T}_s$ deformations involving both the stress tensor and higher-spin currents \cite{Conti:2019dxg}, root-$\TT$ and combined $\TT$ plus root-$\TT$ flows \cite{Tsolakidis:2024wut,Babaei-Aghbolagh:2024hti,He:2025fdz}, curved spacetimes \cite{Caputa:2020lpa}, and connections with Ricci flow \cite{Brizio:2024arr,Morone:2024sdg}. One might ask whether there exists a similar metric interpretation for other auxiliary field deformations of $2d$ theories. In this section, we will explain that the answer is affirmative for general symmetric space sigma models (including $\mathbb{CP}^{N-1}$) deformed by auxiliary field couplings involving an interaction function of one variable, which includes arbitrary stress tensor flows. More precisely, we will show that

\begin{enumerate}[label=(\alph*)]
    \item\label{field_dep_metric} Every stress tensor deformation of a SSSM has equations of motion which are equivalent to those of the undeformed theory coupled to a unit-determinant field-dependent metric $h_{\mu \nu} ( j )$; here by ``unit-determinant'' we mean $\det ( h_{\mu \nu} ) = \det ( \eta_{\mu \nu} )$.\footnote{In our conventions for light-cone coordinates, $\eta_{+-} = \eta_{-+} = - 2$, one actually has $\det ( \eta_{\mu \nu} ) = - 4$ rather than $-1$, so the word ``unit'' is somewhat misleading.}

    \item\label{moving_frame} Any such auxiliary field deformation of a SSSM induces a particular linear transformation on the tangent space to its soliton surface, which can be viewed as a non-trivial choice of moving frame that nonetheless preserves the classical structure equations (equivalently, the integrability of the theory).
\end{enumerate}

Although these results are essentially restatements of previous observations in different language, we find them illuminating for several reasons. First, we will see that the field-dependent metric of point \ref{field_dep_metric} collapses to the usual flat metric on any field configurations with vanishing stress tensor, which offers another way to see the non-deformation of instanton solutions proved in Section \ref{s:instpres1}. Second, \ref{field_dep_metric} is closely connected to the fact that every $4d$ theory of duality-invariant electrodynamics can be viewed as the Maxwell theory coupled to a field-dependent metric \cite{Ferko:2024yhc}; this may not come as a surprise, since families of duality-invariant electrodynamics theories all obey stress tensor flows \cite{Ferko:2023wyi} and can be realized by the Ivanov-Zupnik \cite{Ivanov:2002ab,Ivanov:2003uj} auxiliary field formalism that is quite similar to the corresponding auxiliary field couplings for sigma models.\footnote{See also \cite{Ferko:2024zth,Babaei-Aghbolagh:2025lko,Babaei-Aghbolagh:2025cni,Babaei-Aghbolagh:2025uoz} for related work.} Third, points \ref{field_dep_metric} and \ref{moving_frame} are related: the pullback of the Cartan-Killing form along the Sym-Tafel embedding function encodes dynamical data about the field theory, such as the Lagrangian and stress tensor in the example (\ref{pullback_killing_cartan}). Thus, it is clear that deforming the theory's dynamics by introducing field dependence in the worldsheet metric must also modify the embedding map (and its derivatives, which define the tangents to the surface and hence the pullback) in a non-trivial way. Collectively, this pair of observations therefore gives a clear geometrical interpretation for the deformation of the soliton surface which is enacted by auxiliary field deformations and whose properties are studied in Sections \ref{s:cpnaux}. Finally, just as other deformations of Riemannian manifolds such as Ricci flow are of deep mathematical interest, as a future direction one might hope that the deformations \ref{moving_frame} of soliton surfaces associated with worldsheet auxiliary field deformations may also find applications in mathematics.

\subsection{Field-dependent metrics}\label{sec:field_dep_metric}

We work with a generic symmetric space sigma model, which following the definitions of Section \ref{ss:auxrev}, is the theory of a coset-valued field $g : \Sigma \to G / H$ where the orthogonal decomposition $\mathfrak{g} = \mathfrak{g}^{(0)} \oplus \mathfrak{g}^{(2)}$ of the Lie algebra introduced in (\ref{lie_algebra_decomp}) obeys (\ref{eq:cosetrels}) which is the condition that defines a \emph{symmetric} coset.

To be clear, there are two natural notions that one might mean by ``coupling'' a symmetric space sigma model to a field-dependent metric. The first is to begin with the undeformed action (\ref{undeformed_coset_action}) and replace the flat metric $\eta_{\mu \nu}$ with a field-dependent metric\footnote{This field-dependent metric $h_{\mu \nu} ( j )$ is not to be confused with the second fundamental form $h_{\mu \nu}$ of a soliton surface used, for instance, in equation (\ref{second_fundamental}).} $h_{\mu \nu} = h_{\mu \nu} ( j )$, with inverse $\left( h^{-1} \right)^{\mu \nu}$, that depends on the Maurer-Cartan form $j_\mu$ in some way:
\begin{equation}\label{field_dep_action}
    S [ g , h ] = \frac{1}{2} \int_\Sigma \ttr {j_\mu^{(2)}j_\nu^{(2)}} \left( h^{-1} \right)^{\mu\nu} \, d^2x \, .
\end{equation}
The second natural notion of coupling to a field-dependent metric is to instead begin with the equation of motion (\ref{undeformed_coset_eom}), which implicitly depends on the flat inverse metric $\eta^{\mu \nu}$ which is needed to raise an index (as both $j_\mu^{(2)}$ and the covariant derivative $\mathcal{D}_\mu$ are naturally defined with downstairs indices), and incorporate field dependence here:
\begin{align}\label{field_dep_eom}
    \mathcal{D}_\mu \left( \left( h^{-1} \right)^{\mu \nu} j^{(2)}_\nu \right) = 0 \, .
\end{align}
When we speak of coupling to a field-dependent metric, we will always intend the second of these two possible definitions, i.e., introducing field-dependence in the equation of motion (\ref{field_dep_eom}). The two definitions do not agree in general, since varying the fundamental group-valued field $g$ in the action (\ref{field_dep_action}) produces additional terms from the fluctuation of $\left( h^{-1} \right)^{\mu\nu}$.

We mention in passing that, by an argument essentially identical to that of \cite{Ferko:2024yhc}, the two notions (\ref{field_dep_action}) and (\ref{field_dep_eom}) happen to agree for the special case of the root-$\TT$ deformation of the SSSM, which corresponds to an auxiliary field model with interaction function (\ref{rtt_interaction}) proportional to $\sqrt{\nu_2}$. However, in the remainder of this section we will be interested in deformed models defined by an action (\ref{eq:auxacn}) where the interaction function takes the form (\ref{eq:cpE}) which is an arbitrary function of $\nu_2$ but, for simplicity, no other trace structures.\footnote{We stress, however, that thanks to \eqref{tracescpn}, this ansatz includes all possible higher-spin deformations by auxiliary fields of the $\Cp {N-1}$ model.}
Thus we will henceforth ignore the action-based coupling (\ref{field_dep_action}) and focus on  (\ref{field_dep_eom}).

As we have seen, when the auxiliary field equations of motion are satisfied, the equation of motion (\ref{eq:onshelleom}) for the physical field $g$ in the deformed symmetric space sigma model is
\begin{align}\label{covariant_deformed_eom}
    \mathcal{D}_\mu \mathfrak{J}^{(2) \mu} = 0 \, .
\end{align}
Comparing (\ref{field_dep_eom}) to (\ref{covariant_deformed_eom}), we see that one should identify
\begin{align}\label{J_field_dep_metric}
    \mathfrak{J}^{(2) \mu} = \left( h^{-1} \right)^{\mu \nu} j_\nu^{(2)} \, .
\end{align}
We will always raise or lower indices with the flat Minkowski metric $\eta_{\mu \nu}$ rather than with the field-dependent metric $h_{\mu \nu}$. For instance, lowering an index on (\ref{J_field_dep_metric}) gives
\begin{align}
    \mathfrak{J}^{(2)}_\mu = \eta_{\mu \rho} \left( h^{-1} \right)^{\rho \nu} j_\nu^{(2)} = \tensor{\left( h^{-1} \right)}{_\mu^\nu} j_\nu^{(2)} \, .
\end{align}
To avoid confusion, we will also introduce a new symbol for this quantity:
\begin{align}
    \tensor{R}{_\mu^\nu} = \eta_{\mu \rho} \left( h^{-1} \right)^{\rho \nu} \, .
\end{align}
We see that the ``unit-determinant'' condition $\det ( h_{\mu \nu} ) = \det ( \eta_{\mu \nu} )$ is equivalent to
\begin{align}\label{unimodular_Lambda}
    \det \left( R \right) = 1 \, .
\end{align}
Of course, the covariant conservation equation (\ref{covariant_deformed_eom}) is not the only condition that must be satisfied by fields in the deformed SSSM. One also has the Maurer-Cartan identity, whose projections (\ref{mc_projection_first_time}) onto the $\mathfrak{g}^{(0)}$ and $\mathfrak{g}^{(2)}$ subspaces we repeat for convenience:
\begin{align}\label{SSSM_MC}
    F_{\mu \nu}^{(0)} + [ j_\mu^{(2)} , j_\nu^{(2)} ] = 0 \, , \qquad \mathcal{D}_\mu j_\nu^{(2)} - \mathcal{D}_\nu j_\mu^{(2)} = 0 \, .
\end{align}
However, both of these equations only involve differential forms, commutators, and the covariant derivative $\mathcal{D}_\mu$, none of which involve the metric in their definition. Therefore, replacing $\eta^{\mu \nu}$ with the field-dependent inverse metric $\left( h^{-1} \right)^{\mu \nu}$ has precisely the effect of promoting the SSSM equation of motion to the AF-SSSM equation of motion while leaving the other conditions (\ref{SSSM_MC}) unchanged, exactly as desired.

The unimodularity condition (\ref{unimodular_Lambda}), along with the symmetry of the inverse metric $\left( h^{-1} \right)^{\mu \nu} = \left( h^{-1} \right)^{\nu \mu}$, are closely related to the commutator identities (\ref{commutator_identities}) which were used in \cite{Bielli:2024oif} to prove the classical integrability of the auxiliary field deformed SSSM. Indeed, it is trivial to show the following simple fact.

\begin{lem}
    Let $M^{\mu \nu}$ be any symmetric tensor field with $\det ( M^{\mu \nu} ) = \det ( \eta^{\mu \nu} ) = - \frac{1}{4}$, let $\tensor{R}{_\mu^\nu} = \eta_{\mu \rho} M^{\rho \nu}$, and assume $\mathfrak{J}^{(2)}_\mu = \tensor{R}{_\mu^\nu} j_\nu^{(2)}$. Then the identities (\ref{commutator_identities}) are satisfied.
\end{lem}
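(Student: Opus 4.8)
The plan is to verify both identities in (\ref{commutator_identities}) by a direct computation in light-cone coordinates, where the two-dimensionality of the worldsheet trivialises the algebra. First I would parametrise the symmetric tensor $M^{\mu\nu}$ by its three independent light-cone components, $M^{++} = a$, $M^{--} = b$, $M^{+-} = M^{-+} = c$, so that the normalisation hypothesis becomes $\det(M^{\mu\nu}) = ab - c^2 = -\tfrac14$. Using $\eta_{+-} = \eta_{-+} = -2$ and $\eta_{++} = \eta_{--} = 0$, I would then compute $\tensor{R}{_\mu^\nu} = \eta_{\mu\rho}M^{\rho\nu}$ entry by entry, finding $\tensor{R}{_+^+} = \tensor{R}{_-^-} = -2c$ --- the two ``diagonal'' entries agree precisely because $M$ is symmetric --- together with $\tensor{R}{_+^-} = -2b$ and $\tensor{R}{_-^+} = -2a$; consequently $\det(R) = 4c^2 - 4ab = -4\det(M^{\mu\nu}) = 1$, in agreement with the unimodularity statement (\ref{unimodular_Lambda}) recorded above.

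The second step is to expand $\mathfrak{J}^{(2)}_\pm = \tensor{R}{_\pm^\nu}j^{(2)}_\nu$, namely $\mathfrak{J}^{(2)}_+ = -2c\,j^{(2)}_+ - 2b\,j^{(2)}_-$ and $\mathfrak{J}^{(2)}_- = -2a\,j^{(2)}_+ - 2c\,j^{(2)}_-$, and substitute into the brackets. Since $[j^{(2)}_+, j^{(2)}_+] = [j^{(2)}_-, j^{(2)}_-] = 0$, only the cross terms survive: the first identity collapses to $[\mathfrak{J}^{(2)}_+, \mathfrak{J}^{(2)}_-] = (4c^2 - 4ab)\,[j^{(2)}_+, j^{(2)}_-]$, whose coefficient is exactly $\det(R) = 1$ by the determinant hypothesis, while the second identity follows from $[\mathfrak{J}^{(2)}_+, j^{(2)}_-] = -2c\,[j^{(2)}_+, j^{(2)}_-] = [j^{(2)}_+, \mathfrak{J}^{(2)}_-]$, which works because the two diagonal entries of $R$ are equal.

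If a coordinate-free phrasing is preferred, I would observe that in two dimensions every bracket combination is proportional to a single one, $[j^{(2)}_\mu, j^{(2)}_\nu] = \tilde\epsilon_{\mu\nu}\,[j^{(2)}_+, j^{(2)}_-]$ with $\tilde\epsilon$ the Levi-Civita symbol, so that $[\mathfrak{J}^{(2)}_\mu, \mathfrak{J}^{(2)}_\nu] = \tensor{R}{_\mu^\rho}\tensor{R}{_\nu^\sigma}\tilde\epsilon_{\rho\sigma}\,[j^{(2)}_+, j^{(2)}_-] = \det(R)\,\tilde\epsilon_{\mu\nu}\,[j^{(2)}_+, j^{(2)}_-] = [j^{(2)}_\mu, j^{(2)}_\nu]$ by the standard $2\times2$ cofactor identity, and likewise $[\mathfrak{J}^{(2)}_\mu, j^{(2)}_\nu] = \tensor{R}{_\mu^\rho}\tilde\epsilon_{\rho\nu}\,[j^{(2)}_+, j^{(2)}_-]$, which matches $[j^{(2)}_\mu, \mathfrak{J}^{(2)}_\nu]$ once one uses $\tensor{R}{_+^+} = \tensor{R}{_-^-}$, itself just a restatement of the symmetry of $M^{\mu\nu}$. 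I do not expect any genuine obstacle: the content is pure linear algebra, and the only point requiring care is bookkeeping with the light-cone conventions (in particular $\eta_{+-} = -2$ rather than $\pm1$ and the associated factors of $2$), making sure that the normalisation $\det(M^{\mu\nu}) = -\tfrac14$ is exactly the one for which the bracket coefficient comes out to $1$.
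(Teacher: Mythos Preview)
Your proof is correct and follows essentially the same approach as the paper: parametrise $M^{\mu\nu}$ in light-cone components, compute $R$ explicitly (noting that symmetry of $M$ forces the diagonal entries of $R$ to coincide and the determinant condition forces $\det R = 1$), and then verify both commutator identities by direct expansion. The only cosmetic difference is your choice of labels $a,b,c$ for the entries of $M$ rather than of $R$, and your added coordinate-free remark using $\tilde\epsilon_{\mu\nu}$ is a nice bonus not present in the paper.
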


\begin{proof}

As usual, we work in conventions where $\eta_{+-} = \eta_{-+} = - 2$. If 
\begin{align}
    M^{\mu \nu} = \begin{bmatrix} M^{++} & M^{+-} \\ M^{+-} & M^{--} \end{bmatrix}^{\mu \nu} \, , 
\end{align}
we see that
\begin{align}\label{lambda_down_up}
    \tensor{R}{_\mu^\nu} = - 2 \tensor{\begin{bmatrix} M^{+-} & M^{--} \\ M^{++} & M^{+-} \end{bmatrix}}{_\mu^\nu} \, .
\end{align}
It is convenient\footnote{This notation is somewhat misleading since $b, c$ carry Lorentz spin, but we suppress this for simplicity.} to define $a = - 2 M^{+-}$, $b = - 2 M^{--}$, $c = -2 M^{++}$ so that
\begin{align}\label{nice_lambda_thingy}
    R = \begin{bmatrix} a & b \\ c & a \end{bmatrix} \, , \qquad a^2 - b c = 1 \, ,
\end{align}
where the latter condition follows from $\det ( M^{\mu \nu} ) = \det ( \eta^{\mu \nu} ) = - \frac{1}{4}$. Note that $R$ is not a generic $SL(2, \mathbb{R})$ matrix but also a \emph{Toeplitz matrix}, i.e. one for which all of the diagonal entries are equal to the same quantity $a$. Also recall that all of the quantities $M^{\mu \nu}$, and thus $a, b, c$, can be generic functions of the spacetime coordinates $x^{\pm}$ and of the fields $j_{\pm}$, but we suppress this dependence here. The components of the matrix equation $\mathfrak{J}^{(2)}_\mu = \tensor{R}{_\mu^\nu} j_\nu^{(2)}$ are
\begin{align}
    \mathfrak{J}_+^{(2)} = a j_+^{(2)} + b j_-^{(2)} \, , \qquad \mathfrak{J}_-^{(2)} = c j_+^{(2)} + a j_-^{(2)} \, .
\end{align}
Clearly
\begin{align}
    [ \mathfrak{J}_+^{(2)} , \mathfrak{J}_-^{(2)} ] &= \left[ a j_+^{(2)} + b j_-^{(2)} ,  c j_+^{(2)} + a j_-^{(2)} \right] \nonumber \\
    &= \left( a^2 - b c \right) [ j_+^{(2)} , j_-^{(2)} ] \nonumber \\
    &= [ j_+^{(2)} , j_-^{(2)} ] \, ,
\end{align}
where we have used the unit-determinant condition $a^2 - b c = 1$. Likewise
\begin{align}
    [ \mathfrak{J}_+^{(2)} , j_-^{(2)} ] &= \left[ a j_+^{(2)} + b j_-^{(2)} , j_-^{(2)} \right] = a [ j_+^{(2)} , j_-^{(2)} ] \, ,
\end{align}
while
\begin{align}
    [ j_+^{(2)} , \mathfrak{J}_-^{(2)} ] &= \left[ j_+^{(2)} , c j_+^{(2)} + a j_-^{(2)} \right] = a [ j_+^{(2)} , j_-^{(2)} ] \, ,
\end{align}
so $[ \mathfrak{J}_+^{(2)} , j_-^{(2)} ] = [ j_+^{(2)} , \mathfrak{J}_-^{(2)} ]$.
\end{proof}

Thus we see that the relation (\ref{J_field_dep_metric}), where crucially the field-dependent metric is unit-determinant, underlies the integrability of the deformed symmetric space sigma model. Next let us show that such a unit-determinant field-dependent metric exists for all auxiliary field deformed SSSMs with interaction functions (\ref{eq:cpE}) depending only on $\nu_2$. Combining the constraint (\ref{eq:veomcp1}) from the auxiliary field equation of motion with the definition
\begin{align}
    \mathfrak{J}_\pm^{(2)} = - \left( j_\pm^{(2)} + 2 v_\pm^{(2)} \right)
\end{align}
obtained from projecting (\ref{frakJ_defn}) onto $\mathfrak{g}^{(2)}$ and solving, we find
\begin{gather}\label{R_matrix_defn}
    \begin{bmatrix} \mathfrak{J}_+^{(2)} \\ \mathfrak{J}_-^{(2)} \end{bmatrix} = R \begin{bmatrix} j_+^{(2)} \\ j_-^{(2)} \end{bmatrix} \, , \\
    R = \frac{1}{1 - 4 ( E' )^2 \nu_2} \begin{bmatrix} 1 + 4 ( E' )^2 \nu_2 & - 4 E' \tr ( v_+^{(2)} v_+^{(2)} ) \\ - 4 E' \tr ( v_-^{(2)} v_-^{(2)} )  & 1 + 4 ( E' )^2 \nu_2 \end{bmatrix} \, ,
\end{gather}
where again $\nu_2 = \tr ( v_+^{(2)} v_+^{(2)} ) \tr ( v_-^{(2)} v_-^{(2)} )$. Defining
\begin{align}\label{abc_defn}
    a = \frac{1 + 4 ( E' )^2 \nu_2}{1 - 4 ( E' )^2 \nu_2} \, , \qquad b = - \frac{4 E' \tr ( v_+^{(2)} v_+^{(2)} )}{1 - 4 ( E' )^2 \nu_2} \, , \qquad c = - \frac{4 E' \tr ( v_-^{(2)} v_-^{(2)} )}{1 - 4 ( E' )^2 \nu_2} \, ,
\end{align}
we see that
\begin{align}
    a^2 - b c &= \frac{1}{ \left( 1 - 4 ( E' )^2 \nu_2 \right)^2 } \left( \left( 1 + 4 ( E' )^2 \nu_2 \right)^2 - 16 ( E' )^2 \nu_2 \right) \nonumber \\
    &= \frac{1}{ \left( 1 - 4 ( E' )^2 \nu_2 \right)^2 } \left( 1 - 8 ( E' )^2 \nu_2 + 16 ( E' )^4 \nu_2^2 \right) \nonumber \\
    &= 1 \, .
\end{align}
Thus every such model obeys a relation $\mathfrak{J}^{(2)}_\mu = \tensor{R}{_\mu^\nu} j_\nu^{(2)}$ where $R$ takes the form (\ref{nice_lambda_thingy}) of a Toeplitz $SL(2, \mathbb{R})$ matrix. Again, the entries of this matrix are field-dependent, since they contain the auxiliary fields $v_\pm^{(2)}$; in principle, these fields can always be integrated out and expressed in terms of the physical fields $j_\pm^{(2)}$ using the auxiliary field equation of motion, although this can be analytically difficult in practice. 

Note that the quantity in the denominators of (\ref{abc_defn}), $1 - 4 ( E' )^2 \nu_2$, vanishes only for $E ( \nu_2 ) = \pm \sqrt{ \nu_2 }$. This corresponds to the root-$\TT$ interaction function (\ref{rtt_interaction}) in the $\gamma \to \pm \infty$ limit. We exclude this pathological case from our consideration because it corresponds to a strictly infinite deformation parameter. For all other choices of interaction function, the entries in the matrix $R$ of (\ref{R_matrix_defn}) are well-defined.

Raising an index to express this result in terms of $\left( h^{-1} \right)^{\mu \nu}$ as in (\ref{J_field_dep_metric}), rather than in terms of $R$, we see that
\begin{align}
    \left( h^{-1} \right)^{\mu \nu} = \frac{1}{1 - 4 ( E' )^2 \nu_2} \begin{bmatrix} 2 E' \tr ( v_-^{(2)} v_-^{(2)} ) & - \frac{1}{2} - 2 ( E' )^2 \nu_2 \\ - \frac{1}{2} - 2 ( E' )^2 \nu_2  & 2 E' \tr ( v_+^{(2)} v_+^{(2)} ) \end{bmatrix}^{\mu \nu} \, ,
\end{align}
or taking the inverse,
\begin{align}\label{field_dep_metric_downstairs}
    h_{\mu \nu} =  \frac{1}{1 - 4 ( E' )^2 \nu_2} \begin{bmatrix} - 8 E' \tr ( v_+^{(2)} v_+^{(2)} ) & - 2 - 8 ( E' )^2 \nu_2 \\ - 2 - 8 ( E' )^2 \nu_2 & - 8 E' \tr ( v_-^{(2)} v_-^{(2)} ) \end{bmatrix}_{\mu \nu} \, .
\end{align}
As expected, one finds $\det ( h_{\mu \nu} ) = - 4$ and $\det ( \left( h^{-1} \right)^{\mu \nu} ) = - \frac{1}{4}$.

This completes the proof of claim \ref{field_dep_metric} at the beginning of this section: every auxiliary field deformation of a SSSM with an interaction function $E = E ( \nu_2 )$ has equations of motion which are equivalent to those of the seed theory on a field-dependent metric $h_{\mu \nu}$ with $\det ( h ) = \det ( \eta )$. As a conceptual application of this result, we conclude by presenting another perspective on the non-deformation of instantons described in Section \ref{s:instpres1}.

\begin{lem}\label{instanton_flat_metric_lemma}
    Consider any instanton solution of the symmetric space sigma model, i.e. one for which the energy-momentum tensor is vanishing. Evaluating the field-dependent metric $h_{\mu \nu}$ of (\ref{field_dep_metric_downstairs}) on such a solution produces $h_{\mu \nu} = \eta_{\mu \nu}$.
\end{lem}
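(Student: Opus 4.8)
The plan is to exploit the fact that the field-dependent metric~(\ref{field_dep_metric_downstairs}) depends on the auxiliary fields only through the two chiral traces $\ttr{\vt{+}\vt{+}}$, $\ttr{\vt{-}\vt{-}}$ and their product $\nu_2 = \ttr{\vt{+}\vt{+}}\,\ttr{\vt{-}\vt{-}}$. Hence to ``evaluate $h_{\mu\nu}$ on an instanton'' one first lifts the seed instanton to the deformed theory -- which fixes the auxiliary-field profile -- and then reads off~(\ref{field_dep_metric_downstairs}). The key observation I would make is that if $\ttr{\vt{+}\vt{+}} = \ttr{\vt{-}\vt{-}} = 0$ on the lifted solution then automatically $\nu_2 = 0$: the prefactor $(1 - 4(E')^2\nu_2)^{-1}$ collapses to $1$, the diagonal entries $-8E'\,\ttr{\vt{\pm}\vt{\pm}}$ vanish, and the off-diagonal entries reduce to $-2$, leaving $h_{+-} = h_{-+} = -2$ and $h_{++} = h_{--} = 0$, which is precisely $\eta_{\mu\nu}$. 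Thus the whole statement reduces to showing that the chiral traces of $\vt{\pm}$ vanish on the lifted instanton.

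First I would translate the hypothesis that the energy-momentum tensor vanishes into a statement about the currents. For the undeformed SSSM~(\ref{undeformed_coset_action}) the Hilbert stress tensor in light-cone coordinates has chiral components proportional to $\ttr{\jt{+}\jt{+}}$ and $\ttr{\jt{-}\jt{-}}$, while $T_{+-}$ vanishes identically by the classical conformal invariance of the symmetric space sigma model. Hence $T^*_{\mu\nu} = 0$ is equivalent to $\ttr{\jt{+}\jt{+}} = \ttr{\jt{-}\jt{-}} = 0$; for the $\Cp{N-1}$ model this is exactly the content of~(\ref{eq:instantonjj}), which follows from the topological chiral condition~(\ref{topologicalchiralequation}).

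Next I would invoke the instanton lift of Section~\ref{s:cpnaux}: on such a configuration one may take $\vt{\pm} = -\jt{\pm}$, as in~(\ref{eq:vjrel}), and points~\ref{bp:1}--\ref{bp:2} there show that this solves both the $g$- and $v$-equations of motion~(\ref{eq:coseteom}) -- the $g$-equation reduces to the undeformed one because $\mathfrak{J}^{(2)}_\pm = -(\jt{\pm} + 2\vt{\pm})$ collapses to $\jt{\pm}$, and the $v$-constraint holds because $\Delta_\pm$, the gradient of $E(\nu_2)$ contracted with $\vt{\mp}$, is proportional to $E'\,\ttr{\vt{\pm}\vt{\pm}}$ and hence vanishes. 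Consequently $\ttr{\vt{\pm}\vt{\pm}} = \ttr{\jt{\pm}\jt{\pm}} = 0$ and $\nu_2 = 0$, and substituting into~(\ref{field_dep_metric_downstairs}) yields $h_{\mu\nu} = \eta_{\mu\nu}$ by the observation of the first paragraph. (This is also consistent with the Theorem of Section~\ref{s:instpres1}, which already guarantees that the deformed stress tensor stays zero along the flow.)

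The step requiring the most care -- and the main obstacle -- is the behaviour of the $E'$-dependent entries of~(\ref{field_dep_metric_downstairs}) in the limit $\nu_2 \to 0$. For an interaction function $E = E(\nu_2)$ that is analytic with $E'$ regular at the origin -- the class considered throughout this section -- the combinations $E'\,\ttr{\vt{\pm}\vt{\pm}}$ and $(E')^2\nu_2$ genuinely tend to zero, so the conclusion is clean. For the non-analytic root-$\TT$ interaction function~(\ref{rtt_interaction}) one has $E' \propto \nu_2^{-1/2}$, the naive lift $\vt{} = -\jt{}$ is singular (cf.~(\ref{eq:deltroot})), and the argument must instead be run at the level of the action rather than via the field-dependent metric; we therefore exclude this limiting case, consistently with the exclusion of the infinite-parameter root-$\TT$ deformation made earlier in the section.
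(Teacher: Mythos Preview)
Your proof is correct, and the final substitution into~\eqref{field_dep_metric_downstairs} is identical to the paper's. The argument for $\ttr{\vt{\pm}\vt{\pm}}=0$ is where the two diverge. You invoke the explicit instanton lift $\vt{\pm}=-\jt{\pm}$ of Section~\ref{s:cpnaux} (eq.~\eqref{eq:vjrel}) and read off the vanishing chiral traces directly from $\ttr{\jt{\pm}\jt{\pm}}=0$. The paper instead works purely from the auxiliary-field constraint~\eqref{eq:veomcp1}: it squares the constraint, takes traces, and factorises the result as
\[
\ttr{\jt{\pm}\jt{\pm}}=\ttr{\vt{\pm}\vt{\pm}}\Big(1+4E'\,\ttr{\vt{+}\vt{-}}+4(E')^2\nu_2\Big),
\]
then rejects the second factor by arguing it would force $E'$ to depend on the independent quantity $\ttr{\vt{+}\vt{-}}$, contradicting $E=E(\nu_2)$. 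The paper's route is lift-independent -- it shows that \emph{any} $v$-profile compatible with the constraint and with $\ttr{\jt{\pm}\jt{\pm}}=0$ must have vanishing chiral traces, thereby excluding a putative spurious branch of the $v$-equation of motion. Your route is shorter and more transparent but tacitly selects one branch; this is harmless here since Section~\ref{s:cpnaux} already establishes that $\vt{}=-\jt{}$ is the physical lift smoothly connected to the undeformed theory. Your closing caveat about the root-$\TT$ case is apt and mirrors the exclusion the paper makes just above the lemma.
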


\begin{proof}

In the undeformed SSSM, the stress tensor components $T_{\pm \pm}$ vanish if and only if the quantities $\tr ( j_\pm^{(2)} j_\pm^{(2)} )$ vanish, so we assume the latter. Solving the constraint (\ref{eq:veomcp1}) for $j_\pm^{(2)}$ and squaring it gives
\begin{align}
    j_\pm^{(2)} j_\pm^{(2)} &= \left( v_\pm^{(2)} + 2 E' \tr ( v_\pm^{(2)} v_\pm^{(2)} ) v_\mp^{(2)} \right) \left( v_\pm^{(2)} + 2 E' \tr ( v_\pm^{(2)} v_\pm^{(2)} ) v_\mp^{(2)} \right) \nonumber \\
    &= v_\pm^{(2)} v_\pm^{(2)} + 2 E' ( v_\pm^{(2)} v_\mp^{(2)} + v_\mp^{(2)} v_\pm^{(2)} ) \tr ( v_\pm^{(2)} v_\pm^{(2)} ) + 4 ( E' )^2 v_\mp^{(2)} v_\mp^{(2)} \left( \tr ( v_\pm^{(2)} v_\pm^{(2)} ) \right)^2 \, .
\end{align}
Taking the trace of this equation then gives
\begin{align}
    \tr ( j_\pm^{(2)} j_\pm^{(2)} ) &= \tr ( v_\pm^{(2)} v_\pm^{(2)} ) + 4 E' \tr ( v_\pm^{(2)} v_\mp^{(2)} ) \tr ( v_\pm^{(2)} v_\pm^{(2)} ) + 4 ( E' )^2 \tr ( v_\mp^{(2)} v_\mp^{(2)} ) \left( \tr ( v_\pm^{(2)} v_\pm^{(2)} ) \right)^2 \nonumber \\
    &= \tr ( v_\pm^{(2)} v_\pm^{(2)} ) \left( 1 + 4 E' \tr ( v_\pm^{(2)} v_\mp^{(2)} ) + 4 (  E' )^2 \tr ( v_\mp^{(2)} v_\mp^{(2)} ) \tr ( v_\pm^{(2)} v_\pm^{(2)} ) \right) \, .
\end{align}
Thus, for a field configuration in which the two equations $\tr ( j_\pm^{(2)} j_\pm^{(2)} ) = 0$ are satisfied, we must have either the two equations
\begin{align}
    \tr ( v_\pm^{(2)} v_\pm^{(2)} ) = 0 \, ,
\end{align}
or else the single equation
\begin{align}\label{wrong}
    1 + 4 E' \tr ( v_+^{(2)} v_-^{(2)} ) + 4 (  E' )^2 \tr ( v_+^{(2)} v_+^{(2)} ) \tr ( v_-^{(2)} v_-^{(2)} ) = 0 \, .
\end{align}
However, equation (\ref{wrong}) is a quadratic relation that allows us to solve for $E'$ in terms of the two variables $\nu_2 = \tr ( v_+^{(2)} v_+^{(2)} ) \tr ( v_-^{(2)} v_-^{(2)} )$ and $\tr ( v_+^{(2)} v_-^{(2)} )$. But by assumption, $E$ and thus $E'$ depend only on the single variable $\nu_2$ and \emph{not} on $\tr ( v_+^{(2)} v_-^{(2)} )$. Therefore (\ref{wrong}) is a contradiction and we reject this equation.

We conclude that $\tr ( v_\pm^{(2)} v_\pm^{(2)} ) = 0$ and hence $\nu_2 = 0$; evaluating (\ref{field_dep_metric_downstairs}) in this case gives
\begin{align}
    h_{\mu \nu} = \begin{bmatrix} 0 & -2 \\ -2 & 0 \end{bmatrix}_{\mu \nu} \, ,
\end{align}
which is simply $h_{\mu \nu} = \eta_{\mu \nu}$ in our light-cone conventions.
\end{proof}

This gives a satisfying interpretation for the non-deformation of instantons, which satisfy the on-shell vanishing condition (\ref{on_shell_T_vanish}) of the energy-momentum tensor. We have already seen that any stress tensor deformation of a SSSM -- that is, any auxiliary field deformation with an interaction function depending only on $\nu_2$ -- can be interpreted as the undeformed SSSM on a field-dependent metric. However, for those specific field configurations whose stress tensors vanish, Lemma \ref{instanton_flat_metric_lemma} shows that this field-dependent metric precisely reduces to the flat undeformed metric. It follows that any solution of the undeformed theory with a vanishing energy-momentum tensor is also a solution of the deformed model. 

We therefore see geometrically why instanton solutions are undeformed, as proved in Section \ref{s:instpres1}, from the complementary metric perspective. Although we have presented these results for symmetric space sigma models, they also immediately apply to the principal chiral model with target Lie group $G$ (for instance, by taking a coset $( G \times G ) / G$), and we expect that the metric interpretation can be extended to auxiliary field deformations of other coset theories \cite{Cesaro:2024ipq,Bielli:2024oif}, the Yang-Baxter deformed PCM \cite{Bielli:2024fnp}, and so on. Let us now turn to the geometrical interpretation of the other primary focus of this work, namely the soliton surfaces of deformed $\mathbb{CP}^{N-1}$ (or more general symmetric coset) models.

\subsection{Transformation of soliton surface}

We have reviewed, in Section \ref{sec:generalities_soliton}, how the integrable structure of $2d$ field theories can be reinterpreted as statements about the soliton surface defined by the Sym-Tafel immersion $r$ of equation (\ref{sym_tafel}): equality of mixed second derivatives is implied by integrability, periods of the soliton surface are logarithmic $z$-derivatives of the monodromy matrix, and so on. 

In particular, for the $\mathbb{CP}^{N-1}$ model, we have seen in equation (\ref{pullback_killing_cartan}) that the metric components $g_{\pm \pm}$ on the soliton surface are related to the energy-momentum tensor $T_{\pm \pm}$, while $g_{\pm \mp} = g_{\mp \pm}$ is related to the Lagrangian $\mathcal{L}$ (in fact, this is true for more general sigma models). Stress tensor flows, by definition, deform the Lagrangian $\mathcal{L}$ by a function of the energy-momentum tensor. It is therefore natural to expect that such deformations can be interpreted as a modification of the metric on the soliton surface, as we have seen in equation (\ref{deformed_soliton_metric}). Our goal in the present section is to give a geometrical interpretation of this modification in terms of a particular linear transformation on the tangent space to the soliton surfaces, as alluded to in point \ref{moving_frame} above and which is related to the field-dependent metric of statement \ref{field_dep_metric} and developed in the preceding subsection \ref{sec:field_dep_metric}.

We begin by noting that the tangent vectors $r_\mu = \partial_\mu r$ to the soliton surface, computed in equation (\ref{tangent_computation}), can be converted to ``fixed-body'' tangent vectors $t_\mu$ defined by
\begin{align}
    t_\mu = \mathrm{Ad}_\Phi r_\mu = \partial_z L_\mu \, .
\end{align}
In terms of these quantities, equation (\ref{pullback_metric}) becomes $g_{\mu \nu} = \langle r_\mu , r_\nu \rangle = \langle t_\mu , t_\nu \rangle$, due to the Ad-invariance of the bilinear form $\langle \, \cdot \, , \, \cdot \rangle$. Therefore, for questions involving the metric on the soliton surface, it is sufficient to work with the fixed-body tangents $t_\mu$.

Working in light-cone coordinates and taking the $z$-derivatives of the components of the AF-SSSM Lax connection (\ref{eq:defLax}), we see that the $j^{(0)}_\pm$ contributions completely drop out:
\begin{align}
    t_+^{\text{AF-SSSM}} = - 2 \frac{ ( 1 + z^2 ) \mathfrak{J}_+^{(2)} - 2 z j_+^{(2)}}{ ( z^2 - 1 )^2 } \, , \qquad t_-^{\text{AF-SSSM}} = 2 \frac{ ( 1 + z^2 ) \mathfrak{J}_-^{(2)} + 2 z j_-^{(2)}}{ ( z^2 - 1 )^2 } \, .
\end{align}
We decorate these free-body tangents with the superscript AF-SSSM to distinguish them from the corresponding objects in the undeformed SSSM,
\begin{align}\label{old_fixed_tangents}
    t_+^{\text{SSSM}} = - 2 \frac{ j_+^{(2)} }{ ( z + 1 )^2} \, , \qquad t_-^{\text{SSSM}} = 2 \frac{j_-^{(2)}}{ ( z - 1 )^2} \, .
\end{align}
We now wish to relate these two sets of tangents. Recall that
\begin{align}
    \mathfrak{J}_+^{(2)} = a j_+^{(2)} + b j_-^{(2)} \, , \qquad \mathfrak{J}_-^{(2)} = c j_+^{(2)} + a j_-^{(2)} \, ,
\end{align}
for $a, b, c$ given in (\ref{abc_defn}). Solving then yields
\begin{align}
    t_+^{\text{AF-SSSM}} &= \frac{a - 2 z + a z^2 }{ ( 1 - z )^2} t_+^{\text{SSSM}} - \frac{b ( 1 + z^2 )}{ ( z + 1 )^2 } t_-^{\text{SSSM}} \, , \nonumber \\
    t_-^{\text{AF-SSSM}} &= - \frac{c ( 1 + z^2 )}{ ( 1 - z )^2}  t_+^{\text{SSSM}} + \frac{a + 2 z + a z^2}{ ( z + 1 )^2} t_-^{\text{SSSM}} \, .
\end{align}
We thus have
\begin{align}\label{fixed_tangent_transformation}
    t_\mu^{\text{AF-SSSM}} = \tensor{\widetilde{\Lambda}}{_\mu^\nu} t_\nu^{\text{SSSM}} \, ,
\end{align}
where the explicit entries in the matrix $\widetilde{\Lambda}$ in light-cone coordinates are
\begin{align}\label{full_Lambda_substituted}
    \tensor{\widetilde{\Lambda}}{_\mu^\nu} = \frac{1}{1 - 4 ( E' )^2 \nu_2 } \tensor{\left( \begin{bmatrix} 1 & 0 \\ 0 & 1 \end{bmatrix} + 4 E' \begin{bmatrix} E' \nu_2 \frac{ ( 1 + z )^2}{ ( 1 - z )^2} & \tr ( v_+^{(2)} v_+^{(2)} ) \frac{1 + z^2}{ ( 1 + z )^2 } \\ \tr ( v_-^{(2)} v_-^{(2)} ) \frac{1 + z^2}{ ( 1 - z )^2 } & E' \nu_2  \frac{( 1 - z )^2}{( 1 + z )^2 } \end{bmatrix} \right)}{_\mu^\nu} \, .
\end{align}
We have therefore found that the full effect of the auxiliary field deformation of the SSSM is to enact a particular linear transformation on the fixed-body tangent vectors to the surface. 

A few comments are in order. First, just as the metric $h_{\mu \nu}$ introduced in Section \ref{sec:field_dep_metric} was field-dependent, the entries in the matrix $\widetilde{\Lambda}$ depend on the auxiliary fields $v_\pm^{(2)}$ or -- after integrating out these fields using their algebraic equations of motion -- on $j_\pm^{(2)}$, which are themselves proportional to the undeformed fixed-body tangent vectors (\ref{old_fixed_tangents}). Therefore, the linear transformation (\ref{fixed_tangent_transformation}) is highly non-trivial as the matrix elements themselves depend on the undeformed fixed-body tangents $t_\pm^{\text{SSSM}}$.

Second, recall that a similar structure was observed in equation (\ref{other_lax_lambda}) relating $L^{(2)}$ to its formal expression upon setting $E = 0$ by a linear transformation $\Lambda$. We have used a different symbol $\widetilde{\Lambda}$ for the matrix here to emphasize that they are different transformations with different interpretations. The matrix $\Lambda$ acts upon the ``undeformed Lax'' (\ref{eq:undefLx}) which is technically evaluated on solutions of the deformed theory upon formally setting $E = 0$. In contrast, the matrix $\widetilde{\Lambda}$ relates the fixed-body tangents, i.e. the $z$-derivatives of the Lax, in the true undeformed theory to those in the true deformed theory. As a result, the metric of the soliton surface for the deformed theory,
\begin{align}
    g_{\mu \nu}^{\text{AF-SSSM}} = \langle t_\mu^{\text{AF-SSSM}} , t_\nu^{\text{AF-SSSM}} \rangle = \left( t_\mu^A \right)^{\text{AF-SSSM}} \left( t_\nu^B \right)^{\text{AF-SSSM}} \gamma_{AB} \, ,
\end{align}
where we have introduced the components $\gamma_{AB}$ of the Cartan-Killing form, obeys
\begin{align}
    g_{\mu \nu}^{\text{AF-SSSM}} &= \tensor{\widetilde{\Lambda}}{_\mu^\rho} \tensor{\widetilde{\Lambda}}{_\nu^\sigma} \left( t_\rho^A \right)^{\text{SSSM}} \left( t_\sigma^B \right)^{\text{SSSM}} \gamma_{AB} \nonumber \\
    &= \tensor{\widetilde{\Lambda}}{_\mu^\rho} g_{\rho \sigma}^{\text{SSSM}} \tensor{\widetilde{\Lambda}}{_\nu^\sigma} \, .
\end{align}
This is reminiscent of the relation
\begin{align}
    g_{\mu \nu} = e_\mu^a \eta_{ab} e_\nu^b \, ,
\end{align}
between a tangent space metric $\eta_{ab}$ and a spacetime metric $g_{\mu \nu}$ which are related by a choice of vielbein or frame $e_\mu^a$. Thus the effect of the auxiliary field deformation is to make a particular choice of ``moving frame'' $e_\mu^a$ or $\tensor{\widetilde{\Lambda}}{_\mu^\nu}$ on the soliton surface, which varies from point to point on $\Sigma$ according to its implicit dependence on the fields $v_\pm ( x )$ or $j_\pm ( x )$. This extends the result (\ref{metric_Lambda_relation}), which relates the deformed metric to the formally undeformed metric (obtained by setting $E = 0$) by a linear transformation, although again the transformations $\Lambda$ and $\widetilde{\Lambda}$ differ since the quantity $g_{\mu \nu} \vert_{E = 0}$ of (\ref{metric_Lambda_relation}) is not the same as $g_{\mu \nu}^{\text{SSSM}} $.

Third, although we have studied the relation between fixed-body tangents $t_\mu = \mathrm{Ad}_\Phi r_\mu$, it is simple to rephrase the result in terms of the standard tangent vectors $r_\mu$ by inserting appropriate factors of $\mathrm{Ad}$:
\begin{align}
    r_\mu^{\text{AF-SSSM}} = \mathrm{Ad}_{\Phi^{\text{AF-SSSM}}} \left( \tensor{\widetilde{\Lambda}}{_\mu^\nu}  \mathrm{Ad}_{\Phi^{\text{SSSM}}}^{-1}  \left( r_\mu^{\text{SSSM}} \right) \right) \, ,
\end{align}
where $L_\mu^{\text{SSSM}} \Phi^{\text{SSSM}} = \partial_\mu \Phi^{\text{SSSM}}$ and $L_\mu^{\text{AF-SSSM}} \Phi^{\text{AF-SSSM}} = \partial_\mu \Phi^{\text{AF-SSSM}}$, respectively, and the adjoints are understood to act separately on each component of the vectors.

Finally, it may seem that this geometrical deformation is trivial, since it simply corresponds to a choice of frame $e_\mu^a$ in some sense. But we emphasize again that the components of the matrix $\widetilde{\Lambda}$ are field-dependent, and thus depend on undeformed tangent vectors. Performing a change of frame involving an \emph{arbitrary} such matrix $\widetilde{\Lambda}$ would generically violate consistency conditions on the soliton surface, such as the equality of mixed partial derivatives (\ref{mixed_second_equal}). This is clear since such consistency conditions arise from the integrability of the model (i.e. they are implied by the on-shell flatness of the Lax connection), and a generic deformation of a $2d$ integrable theory does not preserve integrability. The surprising feature of this geometrical deformation is that, despite the complicated nature of the choice of moving frame, integrability is preserved (as shown in \cite{Bielli:2024oif}), and therefore this non-trivial linear transformation (\ref{full_Lambda_substituted}) still leads to a well-defined surface satisfying the appropriate structure equations (the Gauss equation, the Codazzi-Mainardi equation, and the Ricci equation of the normal bundle described in Section \ref{sec:generalities_soliton}).

\section{Conclusion and outlook}\label{sec:conclusion}

In this paper, we studied soliton surfaces of the $\mathbb{CP}^{N-1}$ model and its $\TTbar$-like deformation. We constructed soliton surfaces of the model using the Sym-Tafel formula and showed that important physical quantities like the Lagrangian and the energy-momentum tensor are encoded geometrically in the metric tensor of the soliton surface. We then focused on the $\mathbb{CP}^{1}$ case and computed the Gaussian and mean curvatures, finding that the Gaussian curvature is constant -- consistent with the results obtained using other immersion methods in \cite{grundland2012soliton}.\footnote{Although it has been argued \cite{grundland2009analytic} that different immersion approaches are equivalent for the $\mathbb{CP}^{N-1}$ model, readers interested in alternative immersions may refer to \cite{Sym:1981an,grundland2012soliton,grundland2016immersionformulassolitonsurfaces,Grundland2016generalizedsymmetryapproach,Grundland_2011integrableequations} for further details.} Since the $\mathbb{CP}^{N-1}$ model involves a unit constraint, we examined whether this constraint would be affected by the deformation. The result is negative: even though the deformation can potentially introduce an enhanced constraint, it contradicts the seed theory and therefore should not be adopted. Since the deformation approach we applied \cite{Ferko:2022cix} is valid for general sigma models, this indicates that no enhanced constraint should be adopted for any $\TTbar$-deformed sigma model when the undeformed constraints are holonomic. We showed that the topological chiral (instanton) solutions are preserved under the $\TTbar$ deformation and extended this result by providing a general proof that, for any system with a solution whose energy-momentum tensor vanishes, such a solution remains valid in any analytic $\TTbar$-like deformation of that system. Following that, we presented an alternative formalism for the $\mathbb{CP}^{N-1}$ model, expressing it as a coset model, and focused on its higher-spin auxiliary field deformation, which encompasses cases such as $\TTbar$-like deformations and Smirnov-Zamolodchikov higher-spin deformations \cite{Ferko:2024ali}. Within this framework, we demonstrated that the constant Gaussian curvature of soliton surfaces in the $\mathbb{CP}^{1}$ model remains unaffected by any higher-spin deformation. We also showed that in the auxiliary field formalism, the preservation of the instanton solution becomes manifest and follows straightforwardly. We identified a natural sufficient condition that undeformed solutions must satisfy in order to remain valid along a general SZ-like flow for a broad class of sigma models, and extended the argument to the $\sqrt{T\overbar{T}}$ case. Finally, we presented two geometrical interpretations for stress tensor deformations of symmetric space sigma models: they can be viewed either as coupling the undeformed theory to a unit-determinant field-dependent metric, or by choosing a special moving frame on the soliton surface which preserves the classical structure equations that are guaranteed by the integrability of the theory.

Soliton surfaces have gained increasing importance in the study of two-dimensional integrable models in recent years, as they offer a promising means of transforming problems in differential equations into geometric ones, thereby providing a natural geometric interpretation \cite{Sym:1981an,Sym:1983px,solitonsurfacesandtheirapplications,https://doi.org/10.1002/sapm19969619,Bobenko1994,CIESLINSKI19971,bobenko2000painleve,helein2001constant,grundland2012soliton,grundland2016immersionformulassolitonsurfaces,Grundland2016generalizedsymmetryapproach,Grundland_2011integrableequations,Yesmakhanova_2019,Conti_2019,GLandolfi_2003,Baran_2010,talukdar2024fokaslenellsderivativenonlinearschrodinger,Bertrand_2016,Bertrand_2017}. Moving forward, it would be interesting to investigate how important physical quantities are geometrically encoded on soliton surfaces associated with other sigma models, and even with theories beyond the sigma model framework. Such exploration may shed light on the underlying structure of these surfaces and deepen our understanding of their geometric significance. Furthermore, while progress has been made in extending Lax pairs to supersymmetric settings \cite{Bertrand:2017hee,Saleem:2006tv,10.1063/1.528090,Napolitano:1982xg,Bertrand_2016,Bertrand_2017}, it remains an open and intriguing question whether the notion of soliton surfaces can be generalized to supersymmetric models as well. In addition, although we have shown in this paper that no enhanced constraint arises from the $\TTbar$ deformation for general sigma models with holonomic constraints, it is worth exploring whether this result still holds in other classes of models. Lastly, while our construction of soliton surfaces relied on the Sym–Tafel formula, there exist alternative immersion methods that may offer new perspectives, especially when applied to different models. Exploring these methods in parallel could lead to a more comprehensive geometric picture of integrable field theories \cite{grundland2012soliton,grundland2016immersionformulassolitonsurfaces,Grundland2016generalizedsymmetryapproach,Grundland_2011integrableequations,talukdar2024fokaslenellsderivativenonlinearschrodinger}.

\section*{Acknowledgements}

We are very grateful to the participants of the ``Deformations of Quantum Field and Gravity Theories'' mini-workshop, held at the University of Queensland (UQ) from January 30 - February 6, 2025 -- in particular, Nicola Baglioni, Daniele Bielli,  Jue Hou, Tommaso Morone, and Roberto Tateo -- 
for many productive discussions about the ideas in this paper and related topics. 
C.\,F. is supported by the National Science Foundation under Cooperative Agreement PHY-2019786 (the NSF AI Institute for Artificial Intelligence and Fundamental Interactions).
M.\,G. and G.\,T.-M. have been supported by the Australian Research Council (ARC) Future Fellowship FT180100353, ARC Discovery
Project DP240101409, the Capacity Building Package at the University of Queensland, and a faculty start-up funding of UQ's School of Mathematics and Physics.
Z.\,H.  is supported by a postgraduate scholarship at the University
of Queensland.

\appendix

\section{Checking the topological chiral solution under deformation} \label{Expcheck}

For the topological chiral solutions $D_{\pm}\phi=0$, we have the following identities:
\begin{subequations}
\begin{gather}
    D_{0}\phi=D_{\mp}\phi\quad D_{1}\phi=\mp iD_{\mp}\phi\\
    Y=4[(D_{\mp}\phi)^{\dagger}(D_{\mp}\phi)]\quad X=8[(D_{\mp}\phi)^{\dagger}(D_{\mp}\phi)]^2=\frac{Y^2}{2}\\
    Y^2-X=8[(D_{\mp}\phi)^{\dagger}(D_{\mp}\phi)]^2=X\\
    X_{01}=X_{10}=0\quad X_{00}=X_{11}=2[(D_{\mp}\phi)^{\dagger}(D_{\mp}\phi)]=\frac{Y}{2}.
\end{gather}
\label{instantoncase}
\end{subequations}
The equation of motion in Eq. \eqref{deformedEOM} can be expanded as
\begin{equation}
    \begin{split}
    &\quad\underbrace{\frac{2\gamma(\partial_{\mu}X^{\mu\nu})(D_{\nu}\phi_{i})-2\gamma(\partial_{\mu}Y)(D^{\mu}\phi_{i})}{\sqrt{1-2\gamma Y+2\gamma^2(Y^2-X)}}}_{A}\\
    &\underbrace{-\Big(2\gamma X^{\mu\nu}(D_{\nu}\phi_{i})+(1-2\gamma Y)(D^{\mu}\phi_{i})\Big)\frac{-\gamma(\partial_{\mu}Y)+\gamma^2\partial_{\mu}(Y^2-X)}{\Big(1-2\gamma Y+2\gamma^2(Y^2-X)\Big)^{\frac{3}{2}}}}_{B}\\
    &+\underbrace{\frac{2\gamma X^{\mu\nu}(D_{\nu}D_{\mu}\phi_{i})+(1-2\gamma Y )(D^2\phi_{i})}{\sqrt{1-2\gamma Y+2\gamma^2(Y^2-X)}}}_{C}\underbrace{-\frac{2\gamma X^{\mu\nu}[\phi^{\dagger}(\partial_{\mu}D_{\nu}\phi)]+(1-2\gamma Y)[\phi^{\dagger}(\partial_{\mu}D^{\mu}\phi)]}{\sqrt{1-2\gamma Y+2\gamma^2(Y^2-X)}}\phi_{i}}_{D}=0,
    \end{split}
\end{equation}
which consists of four main terms, labelled A through D.

We now rewrite all terms with the common denominator
\begin{equation}
 \Big(1-2\gamma Y+2\gamma^2(Y^2-X)\Big)^{\frac{3}{2}},
\end{equation}
and examine the behavior of the equation at different orders in $\gamma$.  If the topological chiral solutions satisfy the equations of motion, then the coefficient at each order of $\gamma$ must vanish independently. The $0^{\rm th}$-order term is simply the equation of motion in Eq. \eqref{EOMCPN-1}, which vanishes automatically when evaluated on any solution to the undeformed system. For the $\gamma^2$ contribution, we have
\begin{equation}
    \gamma^2\Big(-4Y(\partial_{\mu}X^{\mu\nu})(D_{\nu}\phi_{i})+4Y(\partial_{\mu}Y)(D^{\mu}\phi_{i})\Big)
\end{equation}
from A,
\begin{equation}
    \gamma^2\Big(2X^{\mu\nu}(\partial_{\mu}Y)(D_{\nu}\phi_{i})-\partial_{\mu}(Y^2-X)(D^{\mu}\phi_{i})-2Y(\partial_{\mu}Y)(D^{\mu}\phi_{i})\Big)
\end{equation}
from B,
\begin{equation}
    \gamma^2\Big(-4YX^{\mu\nu}(D_{\nu}D_{\mu}\phi_{i})+4Y^2(D^2\phi_{i})+2(Y^2-X)(D^2\phi_{i})\Big)
\end{equation}
from C, and
\begin{equation}
    \gamma^2\Big(4YX^{\mu\nu}[\phi^{\dagger}(\partial_{\mu}D_{\nu}\phi)]-4Y^2[\phi^{\dagger}(\partial_{\mu}D^{\mu}\phi)]-2(Y^2-X)[\phi^{\dagger}(\partial_{\mu}D^{\mu}\phi)]\Big)\phi_{i}
\end{equation}
from D. Gathering all terms and applying the identities in Eq. \eqref{instantoncase} to simplify, we obtain
\begin{equation}
    3Y^2\Big((D^2\phi_{i})-[\phi^{\dagger}(\partial_{\mu}D^{\mu}\phi)]\phi_{i}\Big)=3Y^2\Big((D^2\phi_{i})+(D^{\mu}\phi)^{\dagger}(D_{\mu}\phi)\phi_{i}\Big)=0,
\end{equation}
which confirms that the $\gamma^2$ term is indeed zero.

Similarly, the $\gamma^3$ contribution consists of
\begin{equation}
    \gamma^3(Y^2-X)\Big(4(\partial_{\mu}X^{\mu\nu})(D_{\nu}\phi_{i})-4(\partial_{\mu}Y)(D^{\mu}\phi_{i})\Big),
\end{equation}
from A,
\begin{equation}
    \gamma^3\partial_{\mu}(Y^2-X)\Big(-2X^{\mu\nu}(D_{\nu}\phi_{i})+2Y(D^{\mu}\phi_{i})\Big)
\end{equation}
from B,
\begin{equation}
    \gamma^3(Y^2-X)\Big(4X^{\mu\nu}(D_{\nu}D_{\mu}\phi_{i})-4Y(D^2\phi_{i})\Big)
\end{equation}
from C, and
\begin{equation}
    \gamma^3(Y^2-X)\Big(-4X^{\mu\nu}[\phi^{\dagger}(\partial_{\mu}D_{\nu}\phi)]\phi_{i}+4Y[\phi^{\dagger}(\partial_{\mu}D^{\mu}\phi)]\phi_{i}\Big)
\end{equation}
from D, and by repeating the same step as before, we eventually obtain
\begin{equation}
    X\Big(-2(\partial_{\mu}Y)(D^{\mu}\phi_{i})-2Y(D^2\phi_{i})+2Y[\phi^{\dagger}(\partial_{\mu}D^{\mu}\phi)]\phi_{i}\Big)+Y(\partial_{\mu}X)(D^{\mu}\phi_{i})=0.
\end{equation}
Finally, for the linear order in $\gamma$, we have
\begin{equation}
    \gamma\Big(2(\partial_{\mu}X^{\mu\nu})(D_{\nu}\phi_{i})-2(\partial_{\mu}Y)(D^{\mu}\phi_{i})\Big)
\end{equation}
from A,
\begin{equation}
    \gamma(\partial_{\mu}Y)(D^{\mu}\phi_{i})
\end{equation}
from B,
\begin{equation}
    \gamma\Big(2X^{\mu\nu}(D_{\nu}D_{\mu}\phi_{i})-4Y(D^2\phi_{i})\Big)
\end{equation}
from C, and
\begin{equation}
    \gamma\Big(-2X^{\mu\nu}[\phi^{\dagger}(\partial_{\mu}D_{\nu}\phi)]\phi_{i}+4Y[\phi^{\dagger}(\partial_{\mu}D^{\mu}\phi)]\phi_{i}\Big)
\end{equation}
from D, leading to
\begin{equation}
    Y(D^2\phi_{i})-Y[\phi^{\dagger}(\partial_{\mu}D^{\mu}\phi)]\phi_{i}-4Y(D^2\phi_{i})+4Y[\phi^{\dagger}(\partial_{\mu}D^{\mu}\phi)]\phi_{i}=0,
\end{equation}
which confirms that the topological chiral solutions still remain valid in the deformed system.

Recall that the energy-momentum tensor of the deformed system in Eq. \eqref{deformedenergymomentumtensor} is
\begin{equation}
    T^{\mu}_{\ \nu}=\frac{2\gamma X^{\mu\theta}X_{\theta\nu}+(1-2\gamma Y)X^{\mu}_{\ \nu}}{\sqrt{1-2\gamma Y+2\gamma^2(Y^2-X)}}-\delta^{\mu}_{\nu}\frac{1-\sqrt{1-2\gamma Y+2\gamma^2(Y^2-X)}}{2\gamma}.
\end{equation}
From the identities previously listed in Eq. \eqref{instantoncase}, $X_{00}=X_{11}=0$ immediately implies
\begin{equation}
    T_{01}=T_{10}=0.
\end{equation}
For the diagonal components, we additionally have
\begin{equation}
    \sqrt{1-2\gamma Y+2\gamma^2(Y^2-X)}=1-\gamma Y,
\end{equation}
which causes the two terms in the expression for $T^{\mu}_{\ \nu}$ to cancel. This yields
\begin{equation}
    T_{00}=T_{11}=0,
\end{equation}
confirming the vanishing of the energy-momentum tensor.

\section{Matching the $\sqrt{T\ov T}$ deformed Lagrangian of \cite{Borsato:2022tmu} via the AFSM}\label{ap:rootlag}

Here we briefly show how to derive the $\sqrt{T\ov T}$ deformed Lagrangian of \cite{Borsato:2022tmu} by integrating out the auxiliary field $\vt {}$. This appendix should be viewed as a consistency check of the auxiliary field formalism. Let us set the interaction function to $E= A\sqrt {\nu _2}$ \cite{Ferko:2024ali}, the Lagrangian is 
\begin{equation}\label{eq:Loriginal}
    \cL = \frac 12 \tr (\jt+\jt-)+ \tr (\vt{+}\vt{-}) + \tr(\jt+\vt{-}+\jt-\vt{+}) + A\sqrt {\nu _2}.
\end{equation}
The $\vt{}$ equations of motion take the form  
\begin{equation}\label{eq:deftr1}
    \begin{split}
            \tr(\vt{+}\vt{-}) +\tr(\jt+\vt{-})&=- A\sqrt {\nu _2},\\
            \tr(\vt{+}\vt{-}) +\tr(\jt-\vt{+})&=- A\sqrt {\nu _2}.
    \end{split}
\end{equation}
Which imply
\begin{equation}
    \tr (\jt+\vt{-})=\tr (\jt-\vt{+}).    
\end{equation}
It is convenient to introduce the quantity 
\begin{equation}
    \cL _0=\frac 12 \tr (\jt+\jt-)+ \tr (\vt+\vt-) + 2\tr(\jt+\vt-),
\end{equation}
we will now use \eqref{eq:deftr1} and the $\vt{}$ equations of motion repeatedly. One can start by eliminating the $\tr (\vt+\vt-)$ term by substituting \eqref{eq:deftr1}
\begin{equation}\label{eq:Lodef}
    \cL_0 - \frac 12 \tr (\jt+\jt-)-\tr (\jt+\vt-)= - A \sqrt {\nu _2}.
\end{equation}
 Now substitute the $\vt-$ equation into \eqref{eq:Lodef},
 \begin{equation}
    \begin{split}
        \cL_0 + \frac 12 \tr (\jt+\jt-)+\frac A{\sqrt {\nu _2}}\,\tr ((\vt-)^2 ) \tr (\jt+\vt+)=-A\sqrt {\nu _2},
    \end{split}
\end{equation}
and yet again the $\vt+$ equation in the $\tr (\jt+\vt+)$ term
\begin{equation}
    \begin{split}
        \cL_0 + \frac 12 \tr (\jt+\jt-)-\frac A{\sqrt {\nu _2}}\,\tr ((\vt-)^2 ) \tr (\jt+\jt+) -A^2\, \tr (\jt+\vt-)&=-A\sqrt {\nu _2}.
    \end{split}
\end{equation}
Finally, it is useful to eliminate $\tr (\jt+\vt-)$ using again \eqref{eq:Lodef}. We end up with the relation 
\begin{equation}
    \begin{split}
                (1-A^2)\,\cL_0+ \frac 12 (1+A^2)\,\tr (\jt+\jt-) + A(1-A^2)\, \sqrt {\nu _2}&= \frac A{\sqrt {\nu _2}}\,\tr ((\vt-)^2 ) \tr((\jt+)^2).
    \end{split}
\end{equation}
It is clear that one can run this sequence of steps upon flipping all signs, hence we end up with two equations 
\begin{equation}\label{eq:solvefornu}
    \begin{split}
        (1-A^2)\,\cL_0+ \frac 12 (1+A^2)\,\tr (\jt+\jt-) + A(1-A^2)\, \sqrt {\nu _2}&= \frac A{\sqrt {\nu _2}}\,\tr ((\vt-)^2 ) \tr((\jt+)^2),\\
        (1-A^2)\,\cL_0+ \frac 12 (1+A^2)\,\tr (\jt+\jt-) + A(1-A^2)\, \sqrt {\nu _2}&= \frac A{\sqrt {\nu _2}}\,\tr ((\vt+)^2 ) \tr((\jt-)^2).
    \end{split}
\end{equation}
At this point, introduce the analogue of $\nu _2$ where $v\longleftrightarrow j$, namely 
\begin{equation}
    \rho _2= \tr ((\jt+)^2)\tr ((\jt-)^2),
\end{equation}
 multiplying the two lines of \eqref{eq:solvefornu} together and setting $A= \tanh (\frac \gamma 2)$ gives \eqref{eq:Lj}, and recovers the result of \cite{Borsato:2022tmu}
\begin{equation}
    \begin{split}
\cL = -\frac 12 \left( \frac {1+A^2}{1-A^2}\right)\tr (\jt+\jt-)+ \frac A{1-A^2}\sqrt {\rho _2}.
    \end{split}
\end{equation}

 \bibliographystyle{utphys}
\bibliography{master}
\end{document}